\newcommand{\comment}[1]{}
\newcommand{\mc}[1]{\mathcal{#1}}
\newcommand{\Out}{\ensuremath{\mc{V}_\mc{O}} }
\newcommand{\out}{\ensuremath{\mathsf{out}} }
\newcommand{\AltPaths}{\mathsf{AltPaths}}
\newcommand{\false}{\mathsf{false}}
\newcommand{\true}{\mathsf{true}}
\newcommand{\Release}{\mathbin\mathcal{R}}
\newcommand{\tupleof}[1]{\ensuremath{\langle #1 \rangle}}
\newcommand{\Until}{\mathbin{\mathcal{U}}}
\newcommand{\V}{\ensuremath{\mathcal{V}} }
\newcommand{\I}{\ensuremath{\mathcal{I}} }
\newcommand{\U}{\ensuremath{\:\mathcal{U}\hspace{1pt}} }
\newcommand{\W}{\ensuremath{\mathcal{W}} }
\newcommand{\R}{\ensuremath{\hspace{1pt}\mathcal{R}\hspace{1pt}} }
\newcommand{\AP}{\ensuremath{\mathsf{AP}} }
\newcommand{\Hide}{\ensuremath{\mathcal{H}\hspace{1pt}} }
\newcommand{\DO}{\mathsf{do}}
\renewcommand{\Out}{\mathit{Out}}
\renewcommand{\input}{\mathsf{in}}
\newcommand{\Paths}{\ensuremath{\mathsf{Paths}} }
\newcommand{\trace}{\ensuremath{\mathit{trace}} }
\begin{document}

\title{A Temporal Logic for Hyperproperties}

\author[1]{Bernd Finkbeiner}
\author[1]{Markus N. Rabe}
\author[2,3]{C\'esar S\'anchez}
\affil[1]{Saarland University, 
  Saarbr\"ucken, Germany\\
  \texttt{\{finkbeiner,rabe\}@cs.uni-saarland.de}}
\affil[2]{IMDEA Software Institute, Madrid, Spain\\
  \texttt{cesar.sanchez@imdea.org}}
\affil[3]{Institute for Information Security, CSIC, Spain}



\maketitle

\begin{abstract}
  Hyperproperties, as introduced by Clarkson and Schneider,
  characterize the correctness of a computer program as a condition on
  its \emph{set} of computation paths.  Standard temporal logics can
  only refer to a single path at a time, and therefore cannot express many
  hyperproperties of interest, including noninterference and other
  important properties in security and coding theory. In this paper,
  we investigate an extension of temporal logic with explicit path
  variables. We show that the quantification over paths naturally
  subsumes other extensions of temporal logic with operators for
  information flow and knowledge. The model checking problem for
  temporal logic with path quantification is decidable. For
  alternation depth 1, the complexity is PSPACE in the length of the
  formula and NLOGSPACE in the size of the system, as for linear-time
  temporal logic.
  \hspace{18.5em}\fbox{\textbf{Submitted to CSL'13 on April 15, 2013}}
\end{abstract}

\section{Introduction}

The automatic verification of computer systems against specifications
in temporal logic is one of the great success stories of logic in
computer science. A controversial question of this area, with a
continuous discussion in the literature since the early 1980s, is
which temporal logic is most suitable to describe and verify the
relevant properties of interest.
Traditionally, this discussion has been framed as a choice between
linear and branching time~\cite{Vardi/2011/Branching}. The linear time
paradigm of logics like linear-time temporal logic (LTL) describes the
correct behavior as a set of
sequences~\cite{Pnueli/1977/TheTemporalLogicOfPrograms}. This view
allows us to express important temporal properties, like invariants
and eventualities, but ignores possible dependencies between different
executions of the system. The branching time paradigm of logics like
computation tree logic (CTL) describes the correct behavior in terms
of multiple possible futures and therefore allows to express of the
existence of certain
computations~\cite{Clarke+Emerson/1981/CTL}. Emerson and Halpern's
CTL*
logic~\cite{Emerson+Halpern/1986/BranchingVersusLinearTimeTemporalLogic}
unifies the linear and branching time view by combining path formulas
with the path \emph{quantifiers} E and A, which allows to continue
on an existentially or universally chosen path.

An important limitation, common to all these temporal logics, is that
they only refer to a \emph{single} path at a time and therefore cannot
express requirements that relate \emph{multiple} paths. In particular,
this limitation excludes \emph{information flow} properties as they
are studied in security and in coding theory. For example,
\emph{noninterference}~\cite{Goguen+Meseguer/1982/SecurityPoliciesAndSecurityModels},
the property that certain secrets remain hidden from an observer,
relates \emph{all} paths that have (possibly) different secrets but
otherwise identical input. Noninterference requires that all such
paths are observationally equivalent to each other. In coding theory,
a typical requirement is that the encoder of a communication
protocol \emph{preserves the entropy} of the input, which means that
there does not exist a \emph{pair of paths} with different input but
identical encoding. Clarkson and Schneider coined the term
\emph{hyperproperties}~\cite{Clarkson+Schneider/10/Hyperproperties},
defined as sets of legal sets of behaviors, for this more general
class of properties. That is, a correct system must produce exactly
one of the legal combinations of behaviors described by the property.

In this paper we study how to extend temporal logics to
hyperproperties. We generalize the path quantifiers
E and A to $\exists \pi$ and $\forall \pi$, respectively, where $\pi$
is a \emph{path variable} that may be referred to in the subsequent
path formula. As a result, we obtain a generalization of CTL*, which
we call HyperCTL. HyperCTL provides a uniform logical framework for
linear-time and branching-time hyperproperties. In this paper, we show
that the quantification over paths naturally subsumes related
extensions of temporal logic that have previously been studied in the
literature, such as the \emph{hide} operator of
SecLTL~\cite{Dimitrova+Finkbeiner+Kovacs+Rabe+Seidl/12/SecLTL}, and
the \emph{knowledge} operator of temporal epistemic
logic~\cite{Meyden/1993/AxiomsForKnowledgeAndTimeInDistributedSystemsWithPerfectRecall,Fagin+Halpern+Moses+Vardi/1995/ReasoningAboutKnowledgeBook}. 
Additionally, HyperCTL can express a rich set of properties that have
previously been studied intensively, particularly in the security
community, but as isolated properties. HyperCTL provides a common logical
framework that allows to customize and compose these properties.


The paper is structured as follows. We introduce HyperCTL in
Section~\ref{sect:logic}. In Section~\ref{sect:relations}, we relate
HyperCTL to other logics and show that the quantification over paths
subsumes operators for information flow and knowledge.  The
\emph{model checking} and \emph{satisfiability} problems of HyperCTL
are discussed in Section~\ref{sect:MCandSat}.
In Section~\ref{sect:ExamplesSecurity}, we show applications of HyperCTL in \emph{security}, including noninterference, declassification, and quantitative information flow.
We conclude with a discussion of open problems in Section~\ref{sect:conclusions}.




\section{HyperCTL}
\label{sect:logic}

In this section, we introduce HyperCTL, our extension of the temporal
logic CTL* for hyperproperties. We begin with a quick review of
standard notation for Kripke structures.

\subsection{Preliminaries: Kripke structures}
\label{sect:preliminaries}

A \emph{Kripke structure} $K=(S,s_0,\delta,\AP,L)$ consists of a set
of \emph{states} $S$, an \emph{initial state} $s_0\in S$, a transition
function $\delta:S\to 2^{S}$, a set of \emph{atomic propositions}
$\AP$, and a labeling function $L:S\to 2^{\AP}$.  We require that all
states $s\in S$ have at least one successor,
$\delta(s)\not=\emptyset$.  The \emph{cardinality} or \emph{size} of a
Kripke structure is defined as the cardinality of its states and
transitions $|K|= |S|\cdot|\AP|+|\delta|$, where $\delta$ is
interpreted as a set of tuples.
A \emph{path} $\pi\in (S\times2^\AP)^{\omega}$ is an infinite sequence
of labeled states (i.e., states equipped with a set of atomic
propositions).
For a path $\pi$ and $i \in \mathbb N$, $\pi(i)$ is the $i$-th state
of the sequence and $\pi[i]$ is the $i$-th set of atomic propositions.
Furthermore, we use $\pi[0,i]$ to denote the prefix of $\pi$ up to
(including) position $i$, and $\pi[i,\infty]$ to denote the suffix starting
at position $i$.  We lift these notions to tuples of paths by applying
the operations for each position:
e.g. $\langle\pi_1,\dots,\pi_n\rangle[i,\infty]=\langle\pi_1[i,\infty],\dots,\pi_n[i,\infty]\rangle$.
Every Kripke structure $K$ gives rise to a set of paths from any state
$s\in S$, denoted by
$\Paths_{K,s}$. 
%
To allow the case of a specification referring to atomic
propositions that are not contained in the atomic propositions $\AP$
of the Kripke structure, we assume that these atomic propositions are
not restricted.
Therefore, we define 
\(
\Paths_{K,s}^{\AP'}=\big\{(s_1,P_1\cup P'_1).(s_2,P_2 \cup P'_2).\dots ~\mid~ (s_1,P_1).(s_2,P_2).\dots\in\Paths_{K,s},~ \forall i.~P'_i\subseteq \AP'\setminus\AP\big\}. \)

\subsection{Syntax}

We extend the syntax of CTL* with path variables. Path variables are introduced by path quantifiers. Since formulas may refer to multiple paths at the same time, 
atomic propositions are indexed with the path variable they refer to.
%
%
The formulas of HyperCTL are defined by the following grammar, where
$a\in\AP$ for some set $\AP$ of atomic propositions and $\pi \in
\V$ for an infinite supply $\V$ of path variables:
%
\[
\quad\quad\quad\varphi ~ ::= \quad~ a_{\pi} \quad~ \vert \quad~ \neg\varphi \quad~ \vert \quad~ 
\varphi\vee\varphi
\quad~ \vert \quad~ \LTLcircle\varphi \quad~ \vert \quad~ \varphi \U\varphi 
\quad~ \vert \quad~ \exists \pi.\; \varphi 
\]
A formula is \emph{closed} if all occurrences of some path variable
$\pi$ are in the scope of a path quantifier $\exists \pi$. A HyperCTL
\emph{specification} is a Boolean combination of closed HyperCTL
formulas each beginning with a quantifier (or its negation). A formula
is in \emph{negation normal form} (NNF) if it only contains negations
in front of labels and in front of quantifiers. Every HyperCTL formula
can be transformed into an equivalent formula in negation normal form,
if we introduce the additional operators
$\varphi_1\wedge\varphi_2\equiv \neg(\neg\varphi_1\vee\neg\varphi_2)$
and $\varphi_1\R\varphi_2\equiv \neg(\neg\varphi_1\U\neg\varphi_2)$.

We extend HyperCTL with \emph{universal} path quantification
$\forall\pi.\;\varphi \equiv \neg\exists\pi.\neg\varphi$ as well as
with the usual temporal operators: $\true\equiv a_\pi\vee\neg a_\pi$,
$\false\equiv\neg\true$, $\LTLdiamond\varphi\equiv\true\U\varphi$,
$\LTLsquare\varphi\equiv\neg\LTLdiamond\neg\varphi$,
$\varphi_1\W\varphi_2\equiv\varphi_1\U\varphi_2\vee\LTLsquare\varphi_1$.
For convenience, we also introduce some abbreviations for comparing
paths.  Given a set $P\subseteq \AP$ of atomic propositions, we
abbreviate $\pi[0]\!=_P\!\pi'[0] ~\equiv~ \bigwedge_{a\in P}
a_{\pi}\!\leftrightarrow\! a_{\pi'}$, and $\;\pi\!=_P\!\pi' ~\equiv~
\LTLsquare (\pi[0]\!=_P\!\pi'[0])$, and analogously for $\neq$, e.g.,
$\pi\!\neq_P\!\pi'~\equiv~\neg(\pi\!=_P\!\pi')$.
%
Finally, 
$[\varphi]_\pi$ indicates that the atomic propositions in $\varphi$
are indexed with the path variable $\pi$, and $[\varphi]_{\pi\to
  \pi'}$ is the formula obtained by replacing index $\pi$ by index
$\pi'$ on all (free) atomic propositions in $\varphi$.



%

\subsection{Semantics}
 
Let $K = (S,s_0,\delta,\AP,L)$ be a Kripke structure and $\varphi$ a
HyperCTL formula over atomic propositions $\AP'$.
We define $\Pi\models_K\varphi$, the satisfaction relation of $\varphi$
with respect to a set of paths $\Pi=\tupleof{\pi_1,\dots,\pi_m}$, as follows:
\[
\begin{array}{l@{\hspace{1.5em}}c@{\hspace{1.5em}}l}
  \Pi\models_K a_{\pi_k} & \text{whenever} & a\in \pi_k[0] \\
  \Pi\models_K \neg \psi & \text{whenever} & \Pi\not\models_K\psi \\
  \Pi\models_K \psi_1 \vee \psi_2 & \text{whenever} & \Pi\models_K\psi_1 \text{ or } \Pi\models\psi_2 \\
  \Pi\models_K\LTLcircle\psi & \text{whenever} & \Pi[1,\infty]\models_K\psi  \\
  \Pi\models_K\psi_1\Until\psi_2 & \text{whenever} & \exists i \geq 0.~ \Pi[i,\infty]\models_K\psi_2 \;\;\;\text{and}\;\;\;
\forall 0 \leq j < i.~ \Pi[j,\infty]\models_K\psi_1  \\
   \Pi\models_K\exists\pi.\; \psi & \text{whenever} & \exists \pi\in\Paths_{K,\pi_n[0]}^{\AP'}.~ \Pi+\pi\models_K\psi \;,
\end{array}
\]
%
%
%
where $\Pi+\pi$ is short for $\langle\pi_1,\dots,\pi_n,\pi\rangle$. For the special case that the set of paths is empty, $\Pi=\langle\rangle$, we set
$\Paths_{K,\pi_n[0]}=\Paths_{K,s_0}$.  We say that a Kripke structure
$K$ \emph{satisfies} a HyperCTL formula $\varphi$, denoted as $K\models
\varphi$, if $\langle\rangle \models_K \varphi$ holds true.  The
\emph{model checking problem} for HyperCTL is to decide
whether a given Kripke structure satisfies a given HyperCTL formula.



%
%

\subsection{Examples}
\label{ssect:examples}

HyperCTL has a broad range of applications.  In this subsection, we
review a few motivating examples to illustrate the expressivity of
HyperCTL. Due to space limitations we focus on applications in
security and only sketch a few examples of other areas, such as the
Hamming distance of messages in coding theory.

For simplicity, we consider a synchronous system
model%
\footnote{HyperCTL can also express meaningful properties for
  asynchronous systems. 
} 
for the remainder of this section.  The system reads input variables
$I\subseteq\AP$ and writes output variables $O\subseteq\AP$ at every
step; for simplicity we assume that all variables are binary.  Since
we are not interested in the internal behavior of systems, we specify
our properties only in terms of their input-output behavior.

\paragraph*{Security} 
Informally, most secrecy properties essentially require that secret
inputs shall not influence the observable behavior of a system.  We assume
that the inputs are partitioned into subsets $H$ and $L$, that
indicate the secret and the public variables (often called \emph{high}
and \emph{low} in the literature), and that the observer may only
observe the output variables $O$.  Noninterference attempts to captures
this notion of secrecy, which can be written in HyperCTL, simply
requiring that the output may only depend on the public inputs:
\[
\forall \pi.\forall \pi'.~ (\pi[0]\!=_O\!\pi'[0]) ~\W~ (\pi[0]\!\not=_I\!\pi'[0]).
\]
This notion can be reinterpreted as a requirement of the system to
produce the same output for all paths up to a point in which a
difference in the public input excuses differences in the observable
behavior.  This reinterpretation enables the expression a basic form
of \emph{declassification} by extending the ``excuse'', using the
following HyperCTL pattern:
\[
\forall \pi.\forall \pi'.~ (\pi[0]\!=_O\!\pi'[0]) ~\W~ (\pi[0]\!\not=_I\!\pi'[0] ~\vee~\varphi),
\]
where $\varphi$ would be instantiated by a simple LTL formula (for example on $\pi$) requiring that the user is logged in. This pattern can
express for example that the user may only see data if he is logged
in, and if he is logged out, he may not receive any further updates.
For a more formal discussion of security definitions we refer to
Section~\ref{sect:ExamplesSecurity}.

\paragraph*{Coding Theory}

Following the seminal work of Shannon, many encoding schemes have been
designed to provide error-resistant transmission of data.  However,
the formal verification of functional correctness of implementations
has received little attention so far.  HyperCTL can easily express
important functional correctness criteria for encoders and decoders.

We first define the fundamental notion of Hamming distance.  The
Hamming distance is the number of positions of two sequences of the
same length that differ.  The formula $\text{Ham}_{P}$ expresses that
two sequences of states have a Hamming distance of at most $d$, considering only differences in the propositions $P\subseteq\AP$:
\[
\renewcommand{\arraystretch}{1.2}
\begin{array}{rl}
\text{Ham}_{P}(0,\pi,\pi')&\hspace{-5pt}=~ \pi\!=_P\!\pi'\;, \\
\text{Ham}_{P}(d,\pi,\pi')&\hspace{-5pt}=~ \pi[0]\!=_P\!\pi'[0] ~\W~ \big(\pi[0]\!\not=_P\!\pi'[0] ~\wedge~ \LTLcircle (\text{Ham}_{P}(d\!-\!1,\pi,\pi'))\big) 
\end{array} 
\]
The Hamming distance is often used as a metric for the error
resistance of a code.  If all paths of a code have a minimal Hamming
distance of $d$, the decoder can retrieve the original word even if up
to $\lfloor\frac{d-1}{2}\rfloor$ errors happened.  

For the scenario in which the encoder transforms an input stream, represented by propositions $I$ into codewords that are represented by propositions $C$, we can specify that an encoder produces only code
words that have minimal Hamming distance $d$:
\[
\forall\pi.\forall\pi'.~ ~\pi\!\not=_I\!\pi' \to \neg\text{Ham}_{C}(d-1,\pi,\pi')~. 
\]
%
Similarly, we could formulate a correctness condition for decoders.
For every code word, and every message that differs from that code
word in less than $\lfloor\frac{d-1}{2}\rfloor$ positions, the decoder
must produce the same decoded message.  Note that these formulations
are completely independent of parameters like the block length and thus can be applied to compare different kinds of coding schemes.

\paragraph*{Edit distance}
Similarly to the Hamming distance, we can define other metrics based
on the insertion or deletion of elements.  The following formula
fragment requires that paths $\pi$ and $\pi'$ differ only in that
$\pi'$ has an additional position:
\[
\pi[0]\!=_P\!\pi'[0] ~\W~ \big(\pi[0]\!\not=_P\!\LTLcircle\pi'[0] \big)~.
\]
We use this observation in Section~\ref{sect:ExamplesSecurity} to
encode security properties for non-synchronous systems.

\section{Related logics}
\label{sect:relations}



\subsection{Linear- and Branching-Time Temporal Logics}

HyperCTL is an extension of CTL*, and therefore
subsumes the usual temporal logics LTL, CTL, and
CTL*~\cite{Emerson+Halpern/1986/BranchingVersusLinearTimeTemporalLogic,Kupferman+Vardi+Wolper/2000/AnAutomataTheoreticApproachToBranchingTimeMC}.

\begin{theorem}
HyperCTL subsumes LTL, CTL, and CTL*.
\end{theorem}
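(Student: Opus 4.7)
The plan is to show the containment in three stages by exhibiting a translation from CTL* to HyperCTL; the LTL and CTL cases then follow because both are syntactic fragments of CTL*. The translation idea is that a CTL* formula is always evaluated with respect to a single ``current path'' and a single ``current position'' along it, so one can simulate CTL* inside HyperCTL by carrying an explicit path variable $\pi$ that plays the role of the current path and by exploiting HyperCTL's convention that all path variables advance synchronously under $\LTLcircle$ and $\U$.

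Concretely, I would define two translation functions by mutual induction on the CTL* syntax: $T_s(\varphi,\pi)$ for state formulas and $T_p(\psi,\pi)$ for path formulas, each producing a HyperCTL formula whose free path variable is $\pi$. The essential clauses are $T_s(a,\pi)=a_\pi$, the expected homomorphic cases for the Booleans, $T_p(\LTLcircle\psi,\pi)=\LTLcircle\, T_p(\psi,\pi)$ and $T_p(\psi_1\U\psi_2,\pi)=T_p(\psi_1,\pi)\U T_p(\psi_2,\pi)$, embedding of state into path formulas $T_p(\varphi,\pi)=T_s(\varphi,\pi)$, and the decisive clause $T_s(E\psi,\pi)=\exists\pi'.\,T_p(\psi,\pi')$ with $\pi'$ fresh (and dually $\forall$ for $A$). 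A closed CTL* state formula $\varphi$ to be checked at $s_0$ is then translated as $\exists\pi.\,T_s(\varphi,\pi)$, since by the HyperCTL semantics the quantifier with empty context ranges over paths starting at $s_0$.

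The correctness proof is a straightforward simultaneous induction on CTL* formulas, establishing the two invariants that, for every path $\pi$ of $K$ and every $i\in\mathbb{N}$, one has $K,\pi,i\models_{\text{CTL*}}\psi$ iff $\langle\pi[i,\infty]\rangle\models_K T_p(\psi,\pi)$ and, for state formulas, $K,\pi(i)\models_{\text{CTL*}}\varphi$ iff $\langle\pi[i,\infty]\rangle\models_K T_s(\varphi,\pi)$. The atomic, Boolean, $\LTLcircle$, and $\U$ cases are immediate because in HyperCTL with a singleton context the temporal operators behave exactly as in LTL. The critical case is $E\psi$: by definition $K,\pi(i)\models E\psi$ means that some path $\pi'$ starting in $\pi(i)$ satisfies $\psi$; on the HyperCTL side, $\langle\pi[i,\infty]\rangle\models_K \exists\pi'.T_p(\psi,\pi')$ ranges $\pi'$ over $\Paths_{K,\pi_n[0]}$ which, by the shift, is exactly $\Paths_{K,\pi(i)}$, so the two conditions coincide.

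I expect the only mildly delicate step to be the alignment of positions: one must verify that, although HyperCTL's $\LTLcircle$ and $\U$ simultaneously shift \emph{every} path in the tuple, the singleton contexts produced by the translation are never affected by this in a harmful way, because whenever a fresh $\exists\pi'$ is introduced inside a CTL* subformula we effectively ``forget'' the shifted ambient path and start a new one at the current state, exactly matching the CTL* semantics. Once the CTL* inclusion is established, LTL follows by viewing an LTL formula $\psi$ as the CTL* formula $A\psi$, and CTL follows because every CTL formula is syntactically a CTL* state formula; hence HyperCTL subsumes all three logics.
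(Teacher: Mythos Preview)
Your proposal is correct, and indeed more detailed than what the paper provides: the paper states the theorem without proof, treating it as immediate from the sentence preceding it (``HyperCTL is an extension of CTL*, and therefore subsumes the usual temporal logics LTL, CTL, and CTL*''). Your translation---indexing atoms by the most recently introduced path variable and replacing $E$/$A$ by $\exists\pi'$/$\forall\pi'$ for fresh $\pi'$---is exactly the natural way to make this explicit, and it is the embedding the paper has in mind when it calls HyperCTL an ``extension''.

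One small point worth tightening: your stated invariants are phrased for \emph{singleton} contexts $\langle\pi[i,\infty]\rangle$, but after a nested quantifier the HyperCTL context is no longer a singleton. You already anticipate this in your last paragraph; the clean formulation is to strengthen the invariant to arbitrary tuples $\Pi$ whose \emph{last} component is $\pi[i,\infty]$, observing that (i) $T_p(\psi,\pi)$ and $T_s(\varphi,\pi)$ only mention the variable $\pi$, so their truth depends only on that component, and (ii) the HyperCTL semantics of $\exists\pi'$ branches from the last component's current state, which is exactly $\pi(i)$. With that adjustment the induction goes through verbatim, and your top-level wrapping $\exists\pi.\,T_s(\varphi,\pi)$ (or equivalently $\forall\pi$, since state formulas are path-independent) matches the paper's requirement that specifications begin with a quantifier.
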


A more interesting case is quantified propositional temporal logic
(QPTL)~\cite{Sistla+Vardi+Wolper/1987/TheComplementationProblemForBuchiAutomata}.
QPTL also extends LTL with quantification, but HyperCTL quantifies
over paths and QPTL over propositions. In terms of expressiveness, we
prove now that HyperCTL strictly subsumes QPTL.  Informally,
quantification over paths is more powerful than quantification over
propositions, because the set of paths depends on the Kripke
structure, while the truth values of the quantified propositions do
not.  We will also exploit in Section~\ref{sect:MCandSat} the relation
between HyperCTL and QPTL to obtain a model checking algorithm for
HyperCTL.
QPTL formulas are generated by the following grammar,
where $p\in\AP$:
\[
\psi ~~ ::= ~~ p ~~\vert~~ \neg\psi ~~\vert~~
\psi\vee\psi
~~\vert~~ \LTLcircle\psi ~~\vert~~ \LTLdiamond\psi
~~\vert~~ \exists p.\; \psi ~,
\] 
QPTL formulas are interpreted over paths with all operators inheriting
the LTL semantics except $\exists\pi.\psi$:
\[
\pi\models\exists p.\;\psi\;\;\;\;\text{ whenever }\;\;\;\; \exists \pi'\in (2^{\AP})^{\omega}.~\pi=_{\AP\setminus p}\pi' ~\wedge~ \pi'\models\psi\;.
\]
A Kripke structure satisfies a QPTL formula if all paths from the
initial state satisfy the formula.

\begin{theorem}
HyperCTL subsumes QPTL.
\end{theorem}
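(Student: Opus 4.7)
The plan is to translate $\varphi$ into a HyperCTL formula by structural induction, using a translation $T(\varphi,\pi)$ that carries a ``current path'' variable $\pi$, and then to set $\varphi' = \forall\pi_0.\,T(\varphi,\pi_0)$ as the HyperCTL equivalent of $\varphi$. As a preliminary normalization, first $\alpha$-rename $\varphi$ so that every propositional variable bound by a quantifier is fresh with respect to the Kripke structure, i.e.\ outside $\AP_K$; this is harmless, since QPTL is closed under $\alpha$-renaming. The Boolean, propositional and temporal connectives translate in the obvious way ($T(p,\pi) = p_\pi$, $T(\neg\psi,\pi) = \neg T(\psi,\pi)$, $T(\LTLcircle\psi,\pi) = \LTLcircle T(\psi,\pi)$, and so on). The only interesting clause is
\[
T(\exists p.\,\psi,\pi)\;=\;\exists \pi'.\,(\pi =_{\AP_\varphi\setminus\{p\}} \pi')\wedge T(\psi,\pi'),
\]
using the global-equality abbreviation from Section~\ref{sect:logic}, where $\AP_\varphi$ collects the propositions appearing in $\varphi$. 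Intuitively, $\pi'$ is the HyperCTL path that plays the role of the QPTL witness for $\exists p$: it agrees with the current path on every proposition except $p$, and its $p$-valuation is otherwise unconstrained.

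Correctness is then established by induction on $\varphi$, proving that for every Kripke path $\pi$ of $K$ one has $\pi \models \varphi$ (in QPTL) iff $\langle\pi\rangle \models_K T(\varphi,\pi)$ (in HyperCTL). The main obstacle is the existential clause. In the forward direction, given a QPTL witness $\pi'' \in (2^{\AP})^\omega$ that agrees with $\pi$ on $\AP\setminus\{p\}$ and satisfies $\psi$, the HyperCTL witness $\pi'$ is produced by keeping the state sequence of (the current suffix of) $\pi$ and copying the $p$-labels from $\pi''$; because $p \notin \AP_K$ after $\alpha$-renaming, the extended-paths construction $\Paths^{\AP'}_{K,\pi[0]}$ admits this labeling, so $\pi'$ lies in $\Paths^{\AP'}_{K,\pi[0]}$, satisfies the equality constraint by construction, and, by the induction hypothesis applied to $\psi$, satisfies $T(\psi,\pi')$. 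The reverse direction is symmetric: every HyperCTL witness supplies a label sequence that is a legal QPTL witness. The base case and the Boolean/temporal cases are routine propagations through the inductive hypothesis. Finally, $K \models \varphi$ in QPTL unfolds to ``every path from $s_0$ satisfies $\varphi$'', which matches $\langle\rangle \models_K \forall\pi_0.\,T(\varphi,\pi_0)$ directly by the path-quantifier clause of HyperCTL's semantics.
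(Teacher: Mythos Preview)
Your approach is correct, but it differs from the paper's in an instructive way. Both proofs begin by $\alpha$-renaming the bound propositions so they lie outside $\AP_K$; this is the crucial shared observation, since it makes the fresh propositions range freely over $\{0,1\}^\omega$ along any Kripke path by the definition of $\Paths_{K,s}^{\AP'}$. From there the two translations diverge.

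The paper exploits the freshness more directly: it assigns to each bound proposition $p$ its own dedicated path variable $\pi_p$, replaces $\exists p$ by $\exists\pi_p$, and replaces every occurrence of $p$ by $p_{\pi_p}$; free propositions are indexed by the single outermost universal path. No equality constraints are introduced at all. The point is that nothing except the $p$-component of $\pi_p$ is ever inspected, so it does not matter which Kripke state sequence $\pi_p$ traces out or what its other labels are; only its $p$-labeling matters, and that ranges over all of $\{0,1\}^\omega$.

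Your translation instead maintains a single ``current path'' that is replaced at every quantifier by a new path constrained to agree with the old one on $\AP_\varphi\setminus\{p\}$. This mirrors the QPTL semantics more literally (each quantifier really does perturb the current trace in one coordinate), at the cost of the global equality conjuncts and the bookkeeping of a running path parameter. The paper's translation is shorter and avoids those constraints entirely; yours makes the correspondence to the QPTL clause for $\exists p$ more visibly one-to-one.

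One small technical point: your inductive hypothesis is phrased for the singleton assignment $\langle\pi\rangle$, but after unfolding an inner quantifier the HyperCTL evaluation tuple is $\langle\ldots,\pi,\pi'\rangle$, and the next nested quantifier will branch from the \emph{last} component. You should strengthen the hypothesis to ``for every $\Pi$ whose last component is $\pi$, $\pi\models\psi$ iff $\Pi\models_K T(\psi,\pi)$''; this is harmless since $T(\psi,\pi)$ never mentions earlier path variables, but without it the inductive step does not literally close.
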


\begin{proof}(\textit{sketch}) In order to express a QPTL formula in
  HyperCTL, we rename all bound propositions with unique fresh names
  $\AP'$. We use these propositions as free variables, which are
  unconstrained because they do not occur in the Kripke structure.
  Then, each propositional quantification $\exists p$ in the QPTL
  formula is replaced by a path quantification $\exists \pi_p$ in the
  HyperCTL formula, and each occurrence of $p$ is replaced by
  $p_{\pi_p}$.
\end{proof}

\begin{theorem}
QPTL does not subsume HyperCTL.
\end{theorem}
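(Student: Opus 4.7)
My plan is to separate the two logics by exhibiting a concrete HyperCTL formula whose class of satisfying Kripke structures is not QPTL-definable. The guiding intuition is that QPTL is interpreted over individual paths, so at the Kripke-structure level a QPTL formula $\psi$ gives rise to the monotone condition ``every path of $K$ satisfies $\psi$,'' whereas HyperCTL can quantify over several coexisting paths and thereby violate this monotonicity.

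As the separating formula I would use
\[
\varphi ~\equiv~ \exists \pi.\; \exists \pi'.\; \LTLcircle(p_\pi \wedge \neg p_{\pi'})
\]
over $\AP = \{p\}$, together with two Kripke structures: $K_2$ has initial state $s_0$ labeled $\emptyset$ with two successors $s_1$ (labeled $\{p\}$) and $s_2$ (labeled $\emptyset$), each self-looping; $K_1$ agrees with $K_2$ except that the transition $s_0 \to s_2$ is removed. Directly from the semantics, $K_2 \models \varphi$ (choose $\pi$ through $s_1$ and $\pi'$ through $s_2$), while $K_1 \not\models \varphi$ because the only path from $s_0$ forces $\pi = \pi'$ and hence falsifies $p_\pi \wedge \neg p_{\pi'}$.

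It then remains to show that QPTL cannot distinguish $K_1$ from $K_2$. I would prove by induction on QPTL formulas $\psi$ that satisfaction $\pi \models \psi$ depends only on the single labelled path $\pi$, so that at the Kripke-structure level $K \models \psi$ is equivalent to $\Paths(K, s_0) \subseteq L_\psi$ for a fixed path language $L_\psi \subseteq (2^{\AP})^\omega$. Since $\Paths(K_1, s_0) \subseteq \Paths(K_2, s_0)$, this immediately yields $K_2 \models \psi \Rightarrow K_1 \models \psi$ for every QPTL formula $\psi$. If some QPTL $\psi$ were equivalent to $\varphi$ on all Kripke structures, then $K_2 \models \varphi$ would force $K_1 \models \psi$, contradicting $K_1 \not\models \varphi$.

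The hard part I anticipate is the inductive case for the propositional quantifier $\exists q$: one must verify that it really is a path-level condition and does not secretly depend on the Kripke structure. The stated semantics---$\exists q.\psi$ merely rewrites the $q$-component of the current path---makes this intuitively clear, but the formal write-up must spell out that every QPTL formula denotes a language over $2^{\AP}$ independently of the model, so that the subset-monotonicity argument above goes through cleanly.
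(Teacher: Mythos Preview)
Your argument is correct and follows essentially the same approach as the paper: both exploit that QPTL satisfaction at the Kripke-structure level is universal over paths and hence downward-monotone under inclusion of path sets, while HyperCTL can express path \emph{existence}, which is not. The paper's sketch uses the even simpler witness $\exists\pi.\,\LTLcircle p_\pi$ (one path variable suffices), but your two-variable formula and explicit $K_1\subseteq K_2$ construction work just as well; your planned induction on QPTL formulas to confirm that $\exists q$ is purely path-level is exactly what makes the sketch rigorous.
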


\begin{proof}(\textit{sketch})
  Since QPTL formulas express path properties, which are checked on
  all paths, QPTL cannot express properties that express the
  \emph{existence} of paths such as $\exists \pi . \LTLcircle p_\pi$.
\end{proof}

\paragraph*{EQCTL* and QCTL*} 

The logic
EQCTL*~\cite{Kupferman/1995/EQCTL,Kupferman+Vardi+Wolper/2000/AnAutomataTheoreticApproachToBranchingTimeMC}
extends CTL* by the \emph{existential} quantification over
propositions.
QCTL*~\cite{French/2001/DecidabilityOfQuantifiedPropositionalBranchingTimeLogics}
additionally allows for negated existential (i.e. universal)
quantifiers.  However, in contrast with our work, EQCTL* and QCTL* do
not unify the quantification over paths and the quantification over
propositions.  Similar to the result on satisfiability of HyperCTL,
the satisfiability of QCTL* is highly undecidable, which is shown
using a different proof technique.  To the best of the authors'
knowledge, no model checking algorithm has been proposed for QCTL*.
It is straightforward to derive one from the work in this paper.  A
restricted version of QCTL*, building on CTL instead of CTL*, allows
for a comparably low worst-case complexity of the model checking
problem~\cite{Patthak+al/2002/QCTL}.  It is open whether this result
can be transferred to HyperCTL.

\subsection{Epistemic Logics}

Epistemic
logics~\cite{Fagin+Halpern+Moses+Vardi/1995/ReasoningAboutKnowledgeBook}
reason about the \emph{knowledge} of agents and have gained increasing
popularity as specification languages for
security
over the past years~
\cite{Balliu+Dam+Guernic/2011/EpistemicTemporalLogicForIFSecurity,Chadha+al/2009/EpistemicLogicsAppliedPiCalculus,vanderMeyden+Wilke/2007/PreservationOfEpistemicPropertiesInSecurityProtocolImplementations}. 
Since the semantics of epistemic logics relates
multiple possible worlds (paths in the setting of temporal
properties), these logics define hyperproperties.  We now study
the relation between epistemic logics and HyperCTL.

\emph{Epistemic temporal logic} with perfect recall
semantics~\cite{Meyden/1993/AxiomsForKnowledgeAndTimeInDistributedSystemsWithPerfectRecall,Balliu+Dam+Guernic/2011/EpistemicTemporalLogicForIFSecurity,Fagin+Halpern+Moses+Vardi/1995/ReasoningAboutKnowledgeBook}
extends LTL with the \emph{knowledge} operator ${\sf K}_i\psi$, which
specifies that an observer $i$ may \emph{know} $\psi$. This logic is
defined on \emph{interpreted systems}. An interpreted system
$I=(R,L,Q,\AP)$ for $m$ agents (or processors) consists of a set of
possible observations $Q$, a set of runs $R\subseteq(Q^m)^\omega$, and
a labeling function $L:R\times\mathbb{N}\to 2^\AP$.  In general, each
agent can only observe \emph{changes} to its current observation.
Let $r|_n$ be the prefix of a run, up to and including step $n$, and
let $r|_{n,i}$ be the prefix of a run projected to the states of an
agent $i$.  The knowledge of an agent then corresponds to
$\trace(r|_{n,i})$, which is defined for $n=1$ as $\trace(q_i)=q_i$, and otherwise as 
$\trace(r|_{n-2,i}.q_i.q'_i)=\trace(r|_{n-2,i}.q_i)$ if
$q_i=q'_i$ and 
$\trace(r|_{n-2}.q_i.q'_i)=\trace(r|_{n-2,i}.q_i).q'_i$ if
$q_i\not=q'_i$.

Under the \emph{synchronous time assumption}~\cite{Fagin+Halpern+Moses+Vardi/1995/ReasoningAboutKnowledgeBook},
agents may observe the occurrence of every step.  
We capture this assumption by requiring that the observations of all agents include the clock, e.g. $Q=Q'\times\mathbb{N}$, and let the runs provide every agent with the correct step number. 
We then have that $\trace(r|_{n,i})=r|_{n,i}$.

For a given run $r\in R$ and a step number $n$, the satisfaction relation for the knowledge operator is defined as follows: 
\[
\begin{array}{l}
r,n\models{\sf K}_i\psi \quad\text{whenever}\quad\forall r'\in R.~ \trace(r|_{n,i})\!=\!\trace(r'|_{n,i}) ~\to~ r',n\!\models\psi \;
\end{array}
\]

An interpreted system $\I=(R,L,Q,\AP)$ induces a unique Kripke
structure $K(\I)=(S,s_0,\delta,\AP',L')$ with
$S=Q^m\times\mathbb{N}\cup\{s_0\}$, where $s_0$ is a fresh initial
state, and the runs $R$ define the transition relation $\delta$.
For the initial state $s_0$, $\delta(s_0)$ leads to the all initial states of
the interpreted system, which are defined by all prefixes of length 1
of some run in $R$. The new set of propositions $\AP'=\AP\times Q^m$
includes local observations for all agents, to later encode their
knowledge in HyperCTL.  The resulting labeling function is unique.

Note that for finite interpreted systems, that is if $Q$ and $\AP$ are
finite, $L$ is only dependent on the current states and the runs $R$
are generated by a relation on $Q^m$. Hence, the resulting Kripke
structure is finite as well.

Similarly, we can define the semantics of the knowledge operator on
Kripke structures.
We have that $\pi,i \models {\sf K}_P \varphi$ iff
\[
\begin{array}{l}
\forall\pi'\in\Paths_{K}.~ \trace(\pi[0,i],P)\!=_P\!\trace(\pi'[0,i],P) ~\to~ \pi',i\!\models\varphi \;,
\end{array}
\]
where $\trace(\pi,P)$ (the type being $\trace:S^*\times 2^\AP\to S^*$)
is the projection to the points in $\pi$ where propositions in $P$
change.  For example, $\trace(s_1.s_2.s_3.s_4.s_4)=s_1.s_3$, assuming
$L(s_1)=\{a\}$, $L(s_2)=\{a,c\}$, $L(s_3)=\{a,b,c\}$,
$L(s_4)=\{a,b\}$, and $P=\{a,b\}$.



The following theorem shows that, under
the synchronous time assumption, HyperCTL subsumes
epistemic temporal logic.


\begin{theorem}\label{thm:synchEpistemic}
  For every epistemic temporal logic formula $\psi$ with synchronous
  time assumption and interpreted system $\I$, there is a HyperCTL
  formula $\varphi$ such that $\I\models\psi$ iff
  $K(\I)\models\varphi$.
\end{theorem}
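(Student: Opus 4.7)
The plan is to construct, for every ETL formula $\psi$, a HyperCTL formula $\varphi$ over the propositions $\AP'$ of $K(\I)$ such that $\I \models \psi$ iff $K(\I) \models \varphi$. I proceed by structural induction on $\psi$, defining a translation $T$ that carries a current ``main'' path variable $\pi$ and indexes atoms with it; Boolean and LTL operators translate directly using their HyperCTL counterparts.

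The only substantive case is ${\sf K}_i \chi$. Its semantics at step $n$ of a run $r$ demands universal quantification over runs $r'$ that begin at the initial state of $\I$ and share agent $i$'s observations with $r$ at steps $0,\ldots,n$. Since HyperCTL's quantifier $\exists \pi'$ introduces a path starting at the current state of the ambient tuple rather than at $s_0$, I cannot introduce this auxiliary path locally. My plan is therefore to \emph{hoist} the quantifier: for each occurrence of ${\sf K}_{i_j} \chi_j$ in $\psi$ I introduce a fresh path variable $\pi_j$, universally quantified at the outermost level of $\varphi$, so that $\pi_j$ starts at $s_0$. At the original syntactic position of ${\sf K}_{i_j} \chi_j$ I place $T[\chi_j]_{\pi_j}$, the translation of $\chi_j$ with $\pi_j$ as the main path. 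Under the synchronous time assumption the clock lies in $P_{i_j}$, so indistinguishability reduces to step-by-step agreement on $P_{i_j}$.

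The main obstacle is encoding ``$\pi$ and $\pi_j$ have agreed on $P_{i_j}$ at every step from $0$ up to now'' in HyperCTL, which offers no past-time operators. My plan is to propagate the agreement test through the inductive translation itself: each time the recursion crosses a temporal operator such as $\LTLcircle$ or $\U$, for every hoisted variable $\pi_j$ still in scope I guard the remainder by the material implication $(\pi[0] =_{P_{i_j}} \pi_j[0]) \to \cdot$. The resulting cascade of implications is logically equivalent to ``if $\pi$ and $\pi_j$ agreed at every one of the preceding steps, then the body holds at the current step'': failure of agreement at any intermediate step vacuously discharges the obligation, which correctly matches the fact that runs $r'$ whose past observations have diverged from $r$'s are irrelevant to the knowledge claim at step $n$. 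Thus each hoisted $\pi_j$ acts as a proxy history variable, implicitly recording whether the candidate run has so far been indistinguishable from the main path for the relevant agent.

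Correctness follows by structural induction on $\psi$. The key lemma asserts that for every subformula $\chi$, run $r$, and step $n$, $r, n \models \chi$ iff the translated subformula holds at position $n+1$ of the $K(\I)$-path corresponding to $r$ (the offset absorbing the fresh initial state $s_0$), under every instantiation of the hoisted paths. Atomic, Boolean, and LTL cases are routine. The knowledge case uses the synchronous-time identity that $r|_{n,i} = r'|_{n,i}$ iff the corresponding $K(\I)$-paths agree on $P_i$ at each of the steps $1,\ldots,n+1$, which is exactly what the cascaded implications assert at the point of ${\sf K}_{i_j}\chi_j$.
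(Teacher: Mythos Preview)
Your hoisting strategy commits a quantifier swap that breaks soundness whenever a knowledge operator sits under an existential temporal context. Consider $\LTLdiamond\,{\sf K}_i\,p$. Its semantics at $(r,0)$ is $\exists n.\ \forall r'.\ (r|_{n,i}=r'|_{n,i})\Rightarrow r',n\models p$: a \emph{single} $n$ must work for \emph{all} indistinguishable alternatives. Your translation produces $\forall\pi_j.\ \LTLdiamond(\text{``agreed so far''}\to p_{\pi_j})$, which is $\forall r'.\ \exists n.\ (\ldots)$: each alternative may pick its own witnessing step. Take a system with observable $o\in P_i$ constant on all runs, and three runs $r,r_1,r_2$ with $p$ always true on $r$, true at even steps on $r_1$, and true at odd steps on $r_2$. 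Then $\LTLdiamond{\sf K}_i p$ fails on $r$ (no single $n$ makes $p$ hold on all three), yet your formula holds because each $\pi_j$ finds its own $n$. The same hoisting also mishandles polarity: for $\neg{\sf K}_i\chi$ you need an \emph{existential} witness path, but your scheme still puts $\forall\pi_j$ outside and leaves the negation inside.

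The paper avoids both problems by not hoisting the path quantifier directly. Instead it introduces a fresh proposition $u$ on an existentially chosen auxiliary path and replaces the occurrence of ${\sf K}_P\psi$ by $u_\pi$; the valuation of $u$ commits, once and for all alternatives, to the set of positions at which the knowledge claim is asserted. A second fresh proposition $t$ (on a universally quantified path) ranges over all prefix lengths, and only then is the alternative path $\pi''$ quantified---universally for positive occurrences, existentially for negative ones. Because the position marking $u$ is fixed before $\pi''$ is chosen, the intended quantifier order $\exists n.\forall r'$ (resp.\ $\exists n.\exists r'$ in the negative case) is preserved. Your cascaded-implication idea for encoding ``agreed so far'' is reasonable in spirit, but it cannot repair the outer quantifier inversion; you would need a mechanism, like the paper's fresh-proposition trick, to fix the temporal witness uniformly before quantifying over alternatives.
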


%
%

\begin{proof}
  We start by considering extension HyperCTL with the \emph{knowledge}
  operator with the semantics above, and prepend a universal path
  quantifier. Then we apply a step-wise transformation to eliminate
  all knowledge operators.
	
  Let $\varphi$ be a HyperCTL formula in NNF and in prenex normal form
  (note that epistemic logic formulas prepended with a universal path
  quantifier are in prenex normal form).  For brevity we summarize the
  leading quantifiers of $\varphi$ with $\mathbf{Q}$, such that
  $\varphi=\mathbf{Q}.\varphi'$ where $\varphi'$ is quantifier free.
  Let $t$ and $u$ be propositions that $\varphi$ does not refer to.
	
	For the case that a knowledge operator ${\sf K}_P\psi$ occurs with positive polarity, we translate $\varphi$ into the following 
	HyperCTL formula: 
\[
\begin{array}{l}
\mathbf{Q}. \exists\pi. ~\varphi'|_{{\sf K}_P\psi\to u_{\pi}}\;\wedge\;
\big(
\forall\pi'.~ (t_{\pi'}\U\LTLsquare \neg t_{\pi'}) ~\to~  \\\qquad  \forall\pi''.~   \LTLsquare(t_{\pi'}\to (\pi_n[0]\!=_P\!\pi''[0])) ~\to~ \LTLsquare(t_{\pi'}\wedge u_{\pi}\to [\psi]_{\pi''})\big)\;,
\end{array}
\]
and, if the knowledge operator occurs negatively,
\[
\begin{array}{l}
\mathbf{Q}. \exists\pi. ~\varphi'|_{\neg{\sf K}_P\psi\to u_{\pi}}\;\wedge\;
\big(\forall\pi'.~ (t_{\pi'}\U\LTLsquare \neg t_{\pi'}) \to \\\qquad  \exists\pi''.~ \LTLsquare(t_{\pi'}\to (\pi_n[0]\!=_P\!\pi''[0]))  ~\wedge~ \LTLdiamond(t_{\pi'}\!\wedge u_{\pi}\wedge \neg [\psi]_{\pi''})\big)\;,
\end{array}
\]
where $\varphi'|_{{\sf K}_P\psi\to u_{\pi}}$ denotes that in
$\varphi'$ one positive or one negative occurrence, respectively, of the knowledge
operator ${\sf K}_P\psi$ gets replaced by proposition $u$.  We repeat this transformation until no knowledge
operators remain, using the fact that every HyperCTL formula can be
transformed into prenex normal form.

The intuition of this transformation is to label the path (more
precisely the tuple of paths $\Pi$ that is determined by the
quantifiers $\mathbf Q$) with binary signals $t$ and $u$ that indicate
when to check that $\phi$ has to hold on all (or some) alternative
paths.  By replacing the knowledge operator by the existentially
quantified proposition $u$,
we select a satisfying interpretation (if there is any) among the
disjunctions concerning the application of the knowledge operator.
Hence, proposition $u_{\pi}$ at position $i$ indicates whether to
check that all paths that are equal to $\pi_n$ up to position $i$ must
satisfy $\Pi,i\models_K\psi$.  We express that by universally
selecting a position, indicated by the point at which $t_{\pi'}$
switches from true to false (up to that point the paths must be
equal) and then requiring that when we reach that position and the
flag $u_{\pi}$ is set, we also have to check $\psi$ on~path~$\varphi$.

The negated case follows the same intuition on the signals $u$ and
$t$, except that the same (universal) quantifier cannot be reused to
select the alternative path and the signal $t$.
\end{proof}

Without the synchronous time assumption, epistemic temporal logic is still subsumed by HyperCTL, if we add \emph{stutter steps} to the modified Kripke structure; let $K'(\I)$ denote the adapted Kripke structure.  
We give details on the modified transformation and the encoding of the formula in Appendix~\ref{app:asynchEpistemic}. 

\newcounter{thm-asynchEpistemic}
\setcounter{thm-asynchEpistemic}{\value{theorem}}
\begin{theorem}\label{thm:asynchEpistemic}
For every epistemic temporal logic formula $\psi$ and interpreted system $\I$, there  is a HyperCTL formula $\varphi$ such that 
$\I\models\psi$ iff $K'(\I)\models\varphi$. 
\end{theorem}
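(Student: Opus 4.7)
(\textit{sketch})
The plan is to reduce the asynchronous case to Theorem \ref{thm:synchEpistemic} by enriching $K(\I)$ with stutter transitions, so that asynchronous trace equivalence between two original runs becomes position-wise observation equality between suitable stuttered variants in the enriched structure $K'(\I)$.

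I would construct $K'(\I)$ by adding, for each state $s$, a self-loop to $s$ together with a fresh proposition $\textsf{st}$ marking stutter positions; a non-stuttered path of $K'(\I)$ is then a genuine run of $\I$, while every original run $r$ lifts to a family of paths obtained by interleaving arbitrary finite numbers of self-loops. Two runs $r,r'$ satisfy $\trace(r|_{n,i}) = \trace(r'|_{n,i})$ if and only if there exist stuttered paths $\hat r$, $\hat r'$ in $K'(\I)$ of a common length $N$ that agree position-wise on the observations of agent $i$, and whose non-stutter prefixes of length $n+1$ reproduce $r|_n$, $r'|_n$ respectively. The formula transformation of Theorem \ref{thm:synchEpistemic} is then extended by (i) replacing every $\LTLcircle$ and $\U$ inside the original $\psi$ by a stutter-skipping variant that counts only non-stutter positions, so that the underlying LTL semantics is preserved on the non-stuttered sub-path, and (ii) letting the inner quantifier $\pi''$ range over the stuttered paths of $K'(\I)$. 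The alignment condition on $\pi_n$ and $\pi''$ is strengthened to existentially select stuttered variants of both paths whose $P$-observations coincide position-wise up to the marked knowledge-check position, while insisting that both aligned prefixes contain equally many non-stutter positions.

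The main obstacle is guaranteeing the correctness of this alignment: the HyperCTL encoding must ensure that a position-wise match of stuttered variants corresponds to a genuine trace equality at the intended epistemic step. I would enforce this using the edit-distance style pattern sketched in Section \ref{ssect:examples} so that stutter positions appear only where the relevant observation would otherwise repeat. Because the flag $u_\pi$ fires only at non-stutter positions of $\pi_n$, knowledge is checked precisely at epistemic clock ticks corresponding to actual steps of $\I$, and the stutter-skipping temporal operators ensure that $\psi$ is evaluated on the underlying runs. A routine induction on the number of knowledge operators then yields the desired equivalence $\I\models\psi \iff K'(\I)\models\varphi$.
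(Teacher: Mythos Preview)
Your proposal follows the same overall strategy as the paper: enrich $K(\I)$ with labeled stutter self-loops so that asynchronous trace equality becomes position-wise observation equality between suitably stuttered paths, then adapt the encoding of Theorem~\ref{thm:synchEpistemic}.  The technical execution, however, differs in two respects.

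First, where you introduce stutter-skipping variants of $\LTLcircle$ and $\U$ and an edit-distance-style alignment, the paper instead leaves all temporal operators untouched and adds two side conditions on the alternative path $\pi''$: a \emph{progress} condition $\LTLsquare\LTLdiamond\text{move}_{\pi''}$ ruling out infinite stuttering, and a \emph{synch} condition $\LTLsquare\big(\pi[0]\neq_P\LTLcircle\pi[0]\leftrightarrow\pi''[0]\neq_P\LTLcircle\pi''[0]\big)$ forcing the $P$-observations of $\pi$ and $\pi''$ to change in lockstep.  This is both simpler and more local than rewriting every temporal operator.  Second, you speak of ``existentially selecting stuttered variants of both paths''; in the paper's formulation the outer path $\pi_n$ is already fixed by $\mathbf{Q}$, and only the choice of $\pi''$ (together with the synch constraint) absorbs the necessary stuttering.

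What your version buys is an explicit treatment of how the non-knowledge LTL fragment of $\psi$ is to be evaluated on stuttered paths of $K'(\I)$---a point the paper's sketch does not spell out.  The cost is a heavier encoding.  Either route can be made to work; the paper's synch/progress solution is the more economical one.
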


\subsection{SecLTL}
\label{ssect:SecLTL}

SecLTL~\cite{Dimitrova+Finkbeiner+Kovacs+Rabe+Seidl/12/SecLTL} extends
temporal logic with the \emph{hide} modality $\Hide$, which allows us to express
information flow properties such as
noninterference~\cite{Goguen+Meseguer/1982/SecurityPoliciesAndSecurityModels}.
%
%
The semantics of SecLTL
is
defined in terms of labeled transition system, where the edges are
labeled with evaluations of the set of variables.  
The formula $\Hide_{H,O}\varphi$
specifies that the current evaluations of a subset $H$ of the
input variables $I$ are kept secret from an attacker who may observe
variables in $O$ until the release condition $\varphi$ becomes true.
The semantics is formalized in terms of a set of \emph{alternative paths}
to which the \emph{main path} is compared: 
\[
\AltPaths(\pi,H) = \{\pi'\in\Paths_{K,\pi[0]} \mid \pi[1]\!=_{I\setminus H}\!\pi'[1]\wedge \pi[2,\infty]\!=_{I}\!\pi'[2,\infty]\}
\]
A path $\pi$ satisfies the SecLTL formula $\Hide_{H,O}\varphi$, denoted by $\pi\models \Hide_{H,O}\varphi$, iff
\[
\begin{array}{l}
\forall\pi'\in\AltPaths(\pi,H).~ \big(\pi\!=_{O}\!\pi'\;
%
%
\;\vee\;
\exists i\geq0. ~(\pi[i,\infty]\models_K\varphi) \;\wedge\; \pi[1,i\!-\!1]\!=_{O}\!\pi'[1,i\!-\!1]\big) \;.
\end{array}
\]
A transition system $M$ satisfies a SecLTL formula $\psi$, denoted by $M \models \psi$, if every path $\pi$ starting in the initial state satisfies $\psi$.
To encode the hide modality in HyperCTL, we
first translate $M$ into a Kripke structure $K(M)$ whose states
are labeled with the evaluation of the variables on the edge leading
into the state.
The initial state is labeled with the empty set.  In the modified
system, $L(\pi[1])$ corresponds to the current labels.  
%
We encode $\Hide_{H,O}\varphi$ as the following HyperCTL formula: 
\[
\forall\pi'.\;\pi[1]\!=_{I\setminus H}\!\pi'[1] \wedge \LTLcircle\big(\pi[1]\!=_{O}\!\pi'[1]\;\W\;(\pi[1]\!\not=_{I}\!\pi'[1]\vee \varphi)\big) 
\]


\begin{theorem}
For every SecLTL formula $\psi$ and transition system $M$, there  is a HyperCTL formula $\varphi$ such that 
$M\models\psi$ iff $K(M)\models\varphi$. 
\end{theorem}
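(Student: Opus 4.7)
The plan is to translate SecLTL compositionally into HyperCTL and prove equivalence by structural induction on $\psi$. Define a translation $T_\pi(\cdot)$ parametric in a ``current'' path variable $\pi$: set $T_\pi(a) = a_\pi$ on atomic propositions, act homomorphically on Boolean connectives, $\LTLcircle$, and $\U$, and on the hide modality use the encoding suggested in the excerpt,
\[
T_\pi(\Hide_{H,O}\psi) ~=~ \forall\pi'.\; (\pi[1]\!=_{I\setminus H}\!\pi'[1]) \to \LTLcircle\bigl(\pi[1]\!=_{O}\!\pi'[1]\;\W\;(\pi[1]\!\not=_{I}\!\pi'[1]\vee T_{\pi}(\psi))\bigr).
\]
The overall specification is $\varphi = \forall\pi.\,T_\pi(\psi)$, and the claim to establish is $M\models\psi$ iff $K(M)\models\varphi$.

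The bookkeeping step is to align paths in $M$ with paths in $K(M)$: since $K(M)$ labels each state by the evaluation on the entering edge and the initial state carries the empty label, a path $\sigma$ in $M$ corresponds to the unique path $\pi$ in $K(M)$ whose label at position $i+1$ is the edge label of $\sigma$ at step $i$; the pervasive use of $\pi[1]$ in the encoding matches this shift. I would then prove by structural induction the invariant: for every SecLTL subformula $\psi$ and every corresponding pair $(\sigma,\pi)$, $\sigma\models\psi$ iff $\langle\pi\rangle\models_{K(M)} T_\pi(\psi)$. The Boolean, $\LTLcircle$, and $\U$ cases reduce to the corresponding LTL equivalences once the index shift is accounted for.

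The interesting case is $\Hide_{H,O}\psi$. For the forward direction, suppose $\sigma\models \Hide_{H,O}\psi$ and let $\pi'$ be any path in $K(M)$ satisfying the antecedent $\pi[1]\!=_{I\setminus H}\!\pi'[1]$. If the underlying path of $\pi'$ in $M$ lies in $\AltPaths(\sigma,H)$, the SecLTL guarantee yields either outputs matching forever or an output-matching prefix culminating in the release condition, which, via the IH on $\psi$, is exactly the $\W$-formula. Otherwise let $i\geq 2$ be the earliest position where $\pi[i]\!\neq_I\!\pi'[i]$; the right disjunct of the $\W$ fires at position $i$, so it suffices to secure output matching on $[1,i{-}1]$. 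I would obtain this by extending $\pi'[1..i{-}1]$ to a genuine alternative-path witness $\pi^{*}$ that continues with $\pi$'s inputs from $i$ on (using input-enabledness of $M$) and applying SecLTL to $\pi^{*}$. The converse direction is easier: any $\pi'$ induced by an AltPaths member satisfies the antecedent and never triggers the input-divergence disjunct, so the $\W$-formula forces the SecLTL release/matching dichotomy. The main obstacle is this non-AltPaths subcase of the forward direction, which rests on input-enabledness of $M$ (standard for the reactive transition systems considered here); once it is dispatched, everything else is routine bookkeeping around the one-position shift induced by $K(M)$.
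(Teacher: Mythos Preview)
Your proposal is correct and follows the paper's approach: the paper itself gives only the encoding of $\Hide_{H,O}\varphi$ and states the theorem without a detailed argument, so a compositional translation $T_\pi$ together with structural induction is exactly the natural elaboration of what the paper leaves implicit. Your treatment of the non-trivial subcase---a $\pi'$ satisfying the antecedent but not lying in $\AltPaths(\sigma,H)$, handled by splicing $\pi'$'s prefix with $\pi$'s inputs under input-enabledness---is the right way to close the gap between the HyperCTL quantifier (all paths from the current state) and the SecLTL quantifier (only $\AltPaths$); the paper does not spell this out, and your observation that it relies on input-enabledness of $M$ is accurate and worth keeping explicit. One minor remark: you have silently replaced the paper's top-level conjunction $\pi[1]\!=_{I\setminus H}\!\pi'[1]\,\wedge\,\LTLcircle(\ldots)$ by an implication, which is the intended reading (the conjunction under $\forall\pi'$ is clearly a typo); your version is the one that makes the equivalence go through.
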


The model checking problem for SecLTL is PSPACE-hard in the size of
the Kripke
structure~\cite{Dimitrova+Finkbeiner+Kovacs+Rabe+Seidl/12/SecLTL}.
The encoding of Sec\-LTL specifications in HyperCTL implies that the
model checking problem for HyperCTL is also PSPACE-hard (for
a fixed specification of alternation depth $\geq 2$), as claimed in
Theorem~\ref{thm:PSPACElowerBound}.


\section{Model Checking and Satisfiability}
\label{sect:MCandSat}

\subsection{Model Checking}\label{ssect:modelchecking}

In this section we exploit the connection between HyperCTL and QPTL to
obtain a model checking algorithm for HyperCTL. We reduce the model
checking problem for HyperCTL to the satisfiability problem for QPTL.
First, we encode the Kripke structure as a QPTL formula. Then we
qualify all quantifiers to refer to the interpretations of the
propositions that occur on the corresponding path in the Kripke
structure.  The satisfiability problem for the resulting QPTL formula is nonelementary in the quantifier
alternation depth.

\vspace{4pt}\begin{definition}[Alternation Depth] A HyperCTL or QPTL
  formula $\varphi$ in NNF has alternation depth 1 plus the highest
  number of alternations from existential to universal and universal
  to existential quantifiers along any of the paths of the formula's
  syntax tree.  Occurrences of $\Until$ and $\Release$ count as an
  additional alternation.
\end{definition}

\begin{theorem}
\label{thm:nonelementaryformula}
The model checking problem for HyperCTL specifications $\varphi$ with
alternation depth $k$ is complete for NSPACE$(g(k-1,|\varphi|))$.
\end{theorem}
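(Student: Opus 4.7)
The plan is to establish matching upper and lower bounds by exploiting the tight connection with QPTL that has already been developed in Section~\ref{sect:relations}.

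For the upper bound, I would reduce HyperCTL model checking to QPTL satisfiability. Given a Kripke structure $K=(S,s_0,\delta,\AP,L)$ and a HyperCTL formula $\varphi$, construct a QPTL formula $\chi_{K,\varphi}$ of size polynomial in $|K|\cdot|\varphi|$ and of the same alternation depth as $\varphi$, such that $K\models\varphi$ iff $\chi_{K,\varphi}$ is satisfiable. For each path variable $\pi_j$ occurring in $\varphi$, introduce $\lceil\log|S|\rceil$ fresh propositions $\vec{s}^{\,j}$ that encode the state at each step, together with fresh relabeled copies $a^j$ of each $a\in\AP$. Write a QPTL formula $\mathsf{Path}_j$ that expresses ``$\vec{s}^{\,j}$ encodes an infinite path of $K$, and the $a^j$ correctly reflect the labeling'', as the conjunction of a propositional condition on the initial state and an $\LTLsquare$-formula encoding $\delta$; since $\mathsf{Path}_j$ uses only a boxed Boolean combination, it does not contribute to alternation depth. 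Then translate $\varphi$ inductively: each $\exists\pi_j.\psi$ becomes a propositional quantification over $\vec{s}^{\,j}$ and the $a^j$, guarded by $\mathsf{Path}_j$ with the appropriate starting state (namely the current value of $\vec{s}^{\,j-1}$, or $s_0$ at the top level); each $a_{\pi_j}$ becomes $a^j$; temporal operators pass through unchanged. Finally, invoke the classical result of Sistla, Vardi and Wolper~\cite{Sistla+Vardi+Wolper/1987/TheComplementationProblemForBuchiAutomata} that QPTL satisfiability at alternation depth $k$ lies in NSPACE$(g(k-1,n))$.

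For the lower bound I would reduce QPTL satisfiability to HyperCTL model checking. Consider the universal Kripke structure $K_{\AP}$ whose state set is $2^{\AP}$, with $L$ the identity and $\delta$ total, so that $\Paths_{K_{\AP},s}$ ranges over every $\omega$-word of the right form. Given a QPTL formula $\psi$ of alternation depth $k$, apply the construction used in the proof that HyperCTL subsumes QPTL to obtain a HyperCTL formula $\varphi_{\psi}$ of the same alternation depth, and observe that $\psi$ is satisfiable iff $K_{\AP}\models \exists\pi.\,[\varphi_{\psi}]_{\pi}$. Since QPTL satisfiability at alternation depth $k$ is hard for NSPACE$(g(k-1,n))$, the corresponding hardness transfers to HyperCTL model checking.

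The main obstacle is a careful accounting of alternation depth. Two subtleties have to be reconciled: first, the paper's convention that $\Until$ and $\Release$ themselves count towards alternation depth (reflecting the need for an additional exponential blow-up when translating to nondeterministic Büchi automata inside a quantifier); and second, the fact that the above reduction must preserve depth exactly, so that the applied QPTL bound is $g(k-1,\cdot)$ rather than a strictly larger tower. I would handle this by first normalizing $\varphi$ into NNF and prenex form, then showing that the inductive translation never introduces new alternations beyond those already counted in $\varphi$ (the guards $\mathsf{Path}_j$ live under $\LTLsquare$ and do not interact with the outer quantifier pattern), and finally matching the temporal-operator counting convention with the corresponding step in the Sistla--Vardi--Wolper complementation argument.
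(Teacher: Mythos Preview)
Your approach is the paper's approach: reduce model checking to QPTL satisfiability for the upper bound and QPTL satisfiability to model checking for the lower bound, then invoke Sistla--Vardi--Wolper. Two technical points differ from the paper's execution and are worth tightening.

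First, in your upper bound you say ``temporal operators pass through unchanged.'' They do not: the QPTL of this paper has only $\LTLcircle$ and $\LTLdiamond$, not $\Until$. The paper therefore spells out the standard encoding $\psi\Until\psi' \;\rightsquigarrow\; \exists t.\, t \wedge \LTLsquare(t\to\psi'\vee(\psi\wedge\LTLcircle t)) \wedge \neg\LTLsquare t$, and this added existential is precisely why the definition of alternation depth charges one unit for each $\Until$/$\Release$. You acknowledge the accounting issue in your last paragraph, but the actual translation step is what makes the depth bookkeeping go through, so it should be stated rather than deferred.

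Second, your lower bound mixes two ideas. The subsumption translation from Section~\ref{sect:relations} renames bound propositions to \emph{fresh} symbols outside the structure's $\AP$; since $\Paths_{K,s}^{\AP'}$ leaves such symbols unconstrained, the choice of $K$ is then irrelevant and your $K_{\AP}$ plays no role (and the outer $\exists\pi.[\,\cdot\,]_\pi$ is vacuous once $\psi$ is closed). The paper instead takes the direct route: a single proposition $p$, a fully connected two-state structure, and the replacement $\exists t.\psi' \rightsquigarrow \exists\pi_m.\psi'[t\mapsto p_{\pi_m}]$. Because a HyperCTL path quantifier fixes position $0$ to the current state, the paper additionally shifts every atom by one step ($q\mapsto\LTLcircle q$) so that the quantifier is free at the first meaningful position. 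If you prefer your $K_{\AP}$ variant with the propositions \emph{in} the structure, you need the same shift; if you keep the fresh-proposition translation, drop $K_{\AP}$ and use any structure.
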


\begin{proof}
The satisfiability problem for QPTL formulas $\varphi$ in prenex
  normal form and with alternation depth $k$ is complete for
  NSPACE$(g(k-1,|\varphi|))$~\cite{Sistla+Vardi+Wolper/1987/TheComplementationProblemForBuchiAutomata}.

For the upper bound on the HyperCTL model checking complexity, we encode Kripke structures as QPTL formulas (see e.g.~\cite{Kesten+Pnueli/2002/CompleteProofSystemForQPTL,Manna+Pnueli/1992/TheTemporalLogicOfReactiveSystems}) and translate until operators $\psi \;\mc U\; \psi'$~as
\[
\exists t. ~t\hspace{2pt}\wedge\hspace{2pt}\LTLsquare(t\hspace{2pt}\to\hspace{2pt} \psi'\vee(\psi\wedge\LTLcircle t)) \hspace{2pt}\wedge\hspace{2pt} \neg\hspace{2pt}\LTLsquare \hspace{2pt}t\;.
\]
HyperCTL path quantifiers are encoded by extending the set of atomic
propositions, introducing for each quantifier an additional copy of
$\AP$.  That is, assuming that all paths have a unique name, we
translate $\exists\pi_k.~\psi$ to $\exists \AP_k. [\psi]_{\pi_k\to
  \AP_k}$, where $[\psi]_{\pi_k\to \AP_k}$ means that we let the
atomic propositions $a_{\pi_k}$ in $\psi$ refer to the proposition $a$
in the $k$-th copy of $\AP$.

For the lower bound, we reduce the satisfiability problem for a given
QPTL formula $\varphi$ in prenex normal form to a model checking
problem $K \models \varphi'$ for some Kripke structure $K$ and some
HyperCTL formula $\varphi'$. We assume, without loss of generality,
that $\varphi$ is closed (if a free proposition occurs in $\varphi$,
we bind it with an existential quantifier) and each quantifier in
$\varphi$ introduces a different proposition.

The Kripke structure $K = (S,s_0,\delta,\AP,L)$ 
consists of two states $S=\{s_0, s_1\}$, is fully connected $\delta: s \mapsto S$ for $s \in S$, and is labeled with the truth values of a single atomic proposition $\AP=\{p\}$, $L: s_0 \mapsto \emptyset, s_1 \mapsto \{p\}$.   

The QPTL quantifiers choose a valuation of the quantified propositions
already in the first position of a sequence while the path quantifiers
of HyperCTL can only pick a new state in the \emph{next} position.
Hence, we shift all atomic propositions in $\varphi$ by one position,
by replacing every atomic proposition $q$ in $\varphi$ by $\LTLcircle
q$, resulting in the new QPTL formula $\psi$.
The final HyperCTL formula $\varphi' = T_1(\psi)$ is constructed recursively from $\psi$ as follows:
\[
\mc T_m (\psi) =
\begin{cases}
  \exists \pi_{m}.~ \mc T_{m+1}(\psi'[t \mapsto p_{\pi_{m}}]) &
  \text{if }\psi=\exists t.\psi'\\
  \neg\exists \pi_{m}.~ \mc T_{m+1}(\psi'[t \mapsto p_{\pi_{m}}]) & 
  \text{if }\psi=\neg\exists t.\psi'\\
  \psi &  \text{otherwise}
\end{cases}
\]
\end{proof}

An NLOGSPACE lower bound in the size of the Kripke structure for fixed
specifications with alternation depth 1 follows from the non-emptiness
problem of non-deterministic B\"uchi automata.  For alternation depth
2 and more we can derive PSPACE hardness in the size of the Kripke
structure from the encoding of the logic SecLTL into HyperCTL (see
Subsection~\ref{ssect:SecLTL}).

\begin{theorem}
\label{thm:PSPACElowerBound}
For HyperCTL formulas the model checking problem is hard for PSPACE in
the size of the system.
\end{theorem}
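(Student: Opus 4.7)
The plan is to reduce from the SecLTL model-checking problem to the HyperCTL model-checking problem. By the result of Dimitrova et al., SecLTL model checking is already PSPACE-hard in the size of the underlying transition system, and Subsection~\ref{ssect:SecLTL} has already provided a semantics-preserving encoding of any SecLTL formula into a HyperCTL formula interpreted over a Kripke structure $K(M)$ of size polynomial in $|M|$. Composing these two facts directly yields the theorem.

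Concretely, the reduction proceeds as follows. First, fix a SecLTL formula $\psi$ that witnesses the PSPACE-hardness of SecLTL model checking for a varying family of transition systems $M$. Second, build $K(M)$ by labeling each state with the evaluation of the variables on its incoming edge, as in Subsection~\ref{ssect:SecLTL}; this step is computable in logarithmic space and $|K(M)|$ is linear in $|M|$. Third, translate $\psi$ into the HyperCTL formula $\varphi$ by prepending a universal path quantifier over the main path and recursively replacing every $\Hide_{H,O}\psi''$ by the HyperCTL pattern
\[
\forall\pi'.\;\pi[1]\!=_{I\setminus H}\!\pi'[1] \wedge \LTLcircle\big(\pi[1]\!=_{O}\!\pi'[1]\;\W\;(\pi[1]\!\not=_{I}\!\pi'[1]\vee \psi'')\big)
\]
given in Subsection~\ref{ssect:SecLTL}. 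Fourth, invoke the correctness theorem of that subsection to conclude $M\models\psi$ iff $K(M)\models\varphi$.

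Because $\psi$ is fixed, the translated $\varphi$ is fixed as well, and the construction of $K(M)$ is performed in deterministic logarithmic space in $|M|$. Thus the composition is a many-one (LOGSPACE, hence polynomial) reduction from a PSPACE-hard problem to the HyperCTL model-checking problem measured purely in the size of the system, which establishes the claimed lower bound. The resulting $\varphi$ has alternation depth at least two (one universal quantifier for the main path, plus a second universal quantifier arising from each translated occurrence of $\Hide$), which is consistent with the remark preceding the theorem that PSPACE-hardness in the system size requires alternation depth $\geq 2$.

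The main obstacle is to confirm that the PSPACE-hardness of SecLTL can actually be witnessed by a \emph{fixed} formula and a varying transition system, so that the blow-up is attributable entirely to the system size rather than to the formula. This should follow by inspecting the SecLTL hardness proof, which reduces from acceptance of polynomial-space Turing machines through a uniform formula scheme together with a machine-encoding system. Once this is verified, no further work is needed, since the encoding of Subsection~\ref{ssect:SecLTL} preserves satisfaction exactly.
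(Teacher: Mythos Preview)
Your proposal is correct and follows essentially the same approach as the paper: both derive the PSPACE lower bound by invoking the PSPACE-hardness of SecLTL model checking in the system size~\cite{Dimitrova+Finkbeiner+Kovacs+Rabe+Seidl/12/SecLTL} and composing it with the semantics-preserving encoding of SecLTL into HyperCTL from Subsection~\ref{ssect:SecLTL}. Your write-up is in fact more careful than the paper's one-sentence justification, since you make explicit the need for a fixed witnessing formula and for the transformation $M\mapsto K(M)$ to be efficiently computable.
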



\paragraph*{A Remark on Efficiency}
The use of the standard encoding of the until operator in QPTL with an additional quantifier is, in certain cases, wasteful. The satisfiability of QPTL formulas can be checked with an automata-theoretic construction, where we first transform the formula into prenex normal form,
then generate a nondeterministic B\"uchi automaton for the
quantifier-free part of the formula, and finally apply projection and
complementation to handle the existential and universal quantifiers.
In this way, each quantifier alternation, including the alternation introduced by the encoding of the until operators, causes an exponential blow-up. 
However, if an until operator occurs in the
quantifier-free part, the standard transformation of LTL formulas to
nondeterministic B\"uchi automata could handle this operator
without quantifier elimination, which would result in an exponential speedup.

Using this insight, the model checking complexity for many of the formulas presented above and in
Section~\ref{sect:ExamplesSecurity} could be reduced by one
exponent.  Additionally, the
complexity with respect to the size of the system would reduce to NLOGSPACE
for HyperCTL formulas where the leading quantifiers are all of the same type
and are followed by some quantifier-free formula which may contain until
operators without restriction.

\subsection{Satisfiability}

Unfortunately, the positive results regarding the decidability of the model
checking problem for HyperCTL do not carry over to the satisfiability
problem.  The \emph{finite-state satisfiability problem} consists of
the existence of a finite model, while the \emph{general
  satisfiability problem} asks for the existence of a possibly
infinite model. The following theorem shows that both versions are
undecidable.

\begin{theorem}
  For HyperCTL, finite-state satisfiability is hard for $\Sigma^0_1$ and
  general satisfiability is hard for $\Sigma^1_1$.
\end{theorem}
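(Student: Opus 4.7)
The plan is to reduce from classical undecidable problems in each complexity class: the halting problem for deterministic Turing machines ($\Sigma^0_1$-complete) for the finite-state case, and the recurrence problem for non-deterministic Turing machines ($\Sigma^1_1$-complete, asking whether there exists an infinite run visiting a designated recurrent state infinitely often) for the general case. Given such an $M$, I would construct a HyperCTL formula $\varphi_M$ whose satisfying Kripke structures are exactly those containing a path that encodes a corresponding computation of $M$.

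First I would fix an encoding of configurations of $M$ over a finite alphabet $\Gamma = \Sigma \cup (\Sigma \times Q) \cup \{\#\}$, where $\#$ is a configuration delimiter, and represent a run of $M$ as the infinite word $\#c_0\#c_1\#c_2\cdots$ labeling a path of the model. The symbols of $\Gamma$ are carried by fresh atomic propositions outside the Kripke structure's own $\AP$, which are thus unconstrained and can be chosen freely by an existential path quantifier. The formula $\varphi_M$ then asserts: (i) the first block spells the initial configuration of $M$; (ii) each pair of consecutive configurations $c_i, c_{i+1}$ is related by the transition relation of $M$; and (iii) for $\Sigma^0_1$-hardness a halting state eventually appears, while for $\Sigma^1_1$-hardness a recurrent state appears infinitely often (the latter naturally demanding infinite witnesses).

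Condition (ii) is the crux and the one point where HyperCTL's path quantification is essential: to compare a tape cell of $c_i$ with the corresponding cell of $c_{i+1}$, one must relate positions exactly $|c_i|+1$ apart, a distance that varies along the run. The plan is to universally quantify a second path $\pi'$ and force $\pi'$ to encode the same computation word with its first $\#$-block erased, so that position $0$ of $\pi'$ sits directly under position $0$ of the main path $\pi$ within the next configuration. Local consistency of the transition then becomes a routine position-wise LTL safety property over the pair $(\pi,\pi')$, using an LTL counter modulo the $\#$-delimiters to ignore the boundaries themselves.

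The main obstacle is formalizing this ``one-configuration shift,'' since HyperCTL path quantifiers spawn paths from the current state and the two quantified paths consequently share an origin. I would work around this with an auxiliary marker proposition on $\pi'$ that identifies the start of the shifted copy, and then require via $\LTLsquare$ that after that marker the labels of $\pi$ and $\pi'$ agree forever, thereby expressing exactly ``$\pi'$ is $\pi$ with the first block removed.'' Once the alignment is in place, parts (i) and (iii) are standard LTL, and (ii) is a local three-cell comparison between $\pi$ and $\pi'$. For finite-state satisfiability the final halting configuration can be looped to produce a finite model, whereas for general satisfiability the recurrence condition forces infinite models, yielding the $\Sigma^1_1$ lower bound.
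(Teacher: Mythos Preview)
Your proposal has a real gap at the crux, the ``one-configuration shift.'' You place the Turing-machine encoding on \emph{fresh} atomic propositions outside the Kripke structure's $\AP$. But then, as far as these propositions are concerned, the path quantifiers range over \emph{all} $\omega$-labelings, irrespective of $K$: your formula collapses to the shape $\exists w.\,\forall w'.\,\psi(w,w')$ with $\psi$ an LTL condition over two synchronized sequences, which is a QPTL sentence. QPTL satisfiability is decidable (non-elementary, as the paper itself recalls from Sistla--Vardi--Wolper), so your $\varphi_M$ cannot be equivalent to ``$M$ halts'' or ``$M$ has a recurrent run,'' and neither hardness claim follows. Concretely, the marker mechanism does not express the intended shift: requiring that $\pi$ and $\pi'$ \emph{agree} from some marker onward says $\pi[i]=\pi'[i]$ for $i\geq j$, not $\pi'[i]=\pi[i{+}k]$; there is no LTL formula over synchronized positions that captures an unbounded shift, for otherwise the set of valid TM-computation encodings would be $\omega$-regular.

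The paper's proof takes an entirely different route that exploits precisely what your encoding leaves unused: the Kripke structure itself. It reduces from distributed LTL synthesis for two processes with disjoint input sets $I_1,I_2$ and output sets $O_1,O_2$, which is $\Sigma^0_1$-hard for finite-state and $\Sigma^1_1$-hard for general implementations. The HyperCTL formula is $\psi_1\wedge\psi_2\wedge\psi_3$, where $\psi_1=\forall\pi.\,[\varphi]_\pi$ enforces the LTL specification on every path, $\psi_2$ makes the model input-enabled, and $\psi_3=\forall\pi.\forall\pi'.\,(\pi\!=_{I_1}\!\pi'\to\pi\!=_{O_1}\!\pi')\wedge(\pi\!=_{I_2}\!\pi'\to\pi\!=_{O_2}\!\pi')$ is the noninterference-style constraint forcing each output to depend only on its own process's input. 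A (finite) model of $\psi$ is then exactly a (finite) distributed implementation of $\varphi$. The source of undecidability is the existential choice of the Kripke structure constrained by the genuine hyperproperty $\psi_3$, not anything carried along a single path.
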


\begin{proof}
  We give a reduction from the synthesis problem for LTL
  specifications in a distributed architecture consisting of two
  processes with disjoint sets of variables.  The synthesis problem
  consists on deciding whether there exist transition systems for the
  two processes with input variables $I_1$ and $I_2$, respectively,
  and output variables $O_1$ and $O_2$, respectively, such that the
  synchronous product of the two transition systems satisfies a given
  LTL formula $\varphi$. This problem is hard for $\Sigma^0_1$ if the
  transition systems are required to be finite, and hard for
  $\Sigma^1_1$ if infinite transition systems are allowed~(Theorems
  5.1.8 and 5.1.11 in \cite{Rosner/1992/Thesis}).

  To reduce the synthesis problem to HyperCTL satisfiability, we
  construct a HyperCTL formula $\psi$ as a conjunction $\psi=\psi_1
  \wedge \psi_2 \wedge \psi_3$.  The first conjunct ensures that
  $\varphi$ holds on all paths: $\psi_1 = \forall \pi. [\varphi]_\pi$.
  The second conjunct ensures that every state of the model has a
  successor for every possible input: $\forall \pi . \LTLsquare
  \bigwedge_{I \subseteq I_1 \cup I_2} \exists \pi' \LTLcircle
  \bigwedge_{i \in I} i \ \bigwedge_{i \not\in I} \neg i$. The third
  conjunct ensures that the output in $O_1$ does not depend on $I_2$
  and the output in $O_2$ does not depend on $I_1$:
  $\psi_3=\forall\pi.\forall\pi'.~\big(\pi\!=_{I_1}\!\pi' \to
  \pi\!=_{O_1}\!\pi'\big)\wedge\big(\pi\!=_{I_2}\!\pi' \to
  \pi\!=_{O_2}\!\pi'\big)$.

The distributed synthesis problem has a (finite) solution iff the HyperCTL formula $\psi$ has a (finite) model.
\end{proof}

\section{Applications in Security}\label{sect:ExamplesSecurity}

Information flow is a much studied field in security~
\cite{Gray/1991/TowardAMathematicalFoundationForIFSecurity,Barthe+DArgenio+Rezk/04/SecureInformationFlowBySelfComposition,Terauchi+Aiken/05/SecureInformationFlowAsSafetyProblem,Yasuoka+Terauchi/2012/QuantitativeInformationFlowAsSafetyAndLivenessHyperproperties,Goguen+Meseguer/1982/SecurityPoliciesAndSecurityModels,Sabelfeld+Sands/2005/DimensionsAndPrinciplesOfDeclassification,Mantel/2000/PossibilisticDefinitionsofSecurityAnAssemblyKit,Banerjee+Naumann+Rosenberg/2008/ExpressiveDeclassificationPoliciesAndModularStaticEnforcement,Askarov+Myers/2010/ASemanticFrameworkForDeclassificationAndEndorsement,Mclean/1992/ProvingNoninterferenceAndFunctionalCorrectnessUsingTraces,Amtoft+Banerjee/2004/InformationFlowAnalysisInLogicalForm,Koepf+Basin/2007/AnInformationTheoreticModelForAdaptiveSideChannelAttacks,Clarkson+Schneider/10/Hyperproperties,Balliu+Dam+Guernic/2011/EpistemicTemporalLogicForIFSecurity,Zdancewic+Myers/03/ObservationalDeterminism}. 
%
The question of interest is whether a system reveals information about a
cryptographic key or personal data, generally called \emph{the
  secret}, to an observer or attacker.  This problem goes beyond
classical data flow analysis, in the sense that not only forbids
the system from revealing the secret as one of its outputs, but it
also requires that the system's observable behavior is independent
from the secret.

For example, consider the case of a hash set that stores both public
and secret elements. If an attacker observes the list of public
elements in the set he may learn partial information about the
presence of secret elements and their values by observing the
ordering of the low elements.
%
%
Even though in this case a security attack is not immediate and it
might only work under additional assumptions, secrets in the data
structure may be compromised.  Practical attacks based on such partial
information have been implemented. For example, the well known attack
on SSL \cite{Brumley+Boneh/2003/RemoteTimingAttacksArePractical} uses
subtle timing differences in the answers from a server.

The notion of \emph{noninterference} by Goguen and Meseguer is the
starting point of most of the bast body of work on information flow.
This notion states the secrecy requirement in terms of two groups of
users, which represent the secret (their behavior must be secret) and
the attacker, respectively.  The group of attackers may not observe a
different behavior of the system, independent of whether the secret
users perform actions or not.

\begin{theorem}\label{thm:noninterference}
  There is a HyperCTL formula $\varphi_{\mathit{NI}}(G_H,G_L)$ and an
  encoding $K(M)$ of state machines into Kripke structures such that
  for every state machine $M$, and groups of users $G_H$ and $G_L$,
  $G_H$ does not interfere with $G_L$ in $M$ iff
  $K(M)\models\varphi_{\mathit{NI}}(G_H,G_L)$ holds.
\end{theorem}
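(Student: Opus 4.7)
The plan is to give an encoding $K(\cdot)$ of state machines into Kripke structures together with a closed HyperCTL formula $\varphi_{NI}(G_H,G_L)$ whose semantics mirrors Goguen and Meseguer's purge-based definition. Recall that under that definition, $G_H$ does not interfere with $G_L$ iff, for every input trace $w$, the $G_L$-visible output obtained by running $M$ on $w$ coincides with the one obtained by running $M$ on $\mathit{purge}_{G_H}(w)$, where $\mathit{purge}_{G_H}$ deletes all high-user actions. Equivalently, any two input traces that agree after purging high actions produce equal low views.

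For the encoding, I would turn each transition of $M$ into a labeled state of $K(M)$, carrying as atomic propositions (i) a marker $h$ indicating that the action leading to the state was performed by some user in $G_H$, (ii) the input symbol that triggered the transition, and (iii) the $G_L$-observable output, encoded by a set $O_L \subseteq \AP$. From every state I include one successor per enabled input symbol, so that the set of paths of $K(M)$ starting in the initial state is in bijection with the set of runs of $M$ over arbitrary input traces, and a single quantifier $\forall \pi$ ranges exactly over those runs.

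Next, $\varphi_{NI}(G_H,G_L)$ is a universally quantified two-path formula asserting that whenever the inputs of $\pi$ and $\pi'$ purged of high actions coincide, the full $G_L$-observable traces of $\pi$ and $\pi'$ (also naturally purged of high positions) coincide. Since purging deletes positions, a direct stepwise comparison is impossible; my plan is to reuse the alignment pattern already used in the edit-distance fragment and in the asynchronous epistemic encoding. Intuitively, one walks both paths simultaneously using weak-until subformulas of the shape $\LTLsquare\bigl((h_\pi \vee h_{\pi'}) \W (\neg h_\pi \wedge \neg h_{\pi'} \wedge \pi[0]\!=_{I \cup O_L}\!\pi'[0])\bigr)$, letting each path skip consecutive $h$-states locally and requiring that, whenever both paths next hit a position with $\neg h$, the input symbol and all propositions in $O_L$ agree. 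A symmetric companion subformula rules out one path ending in a proper suffix of non-high positions not matched by the other.

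The main obstacle will be packaging this alignment cleanly so that it correctly handles arbitrary interleavings of high positions on either side, infinite runs of high actions, and the coupling between the positions of $\pi$ and $\pi'$ skipped independently of each other; handling this may require one auxiliary pair of existentially quantified flag propositions, analogous to the $t$ and $u$ markers used in the proof of Theorem~\ref{thm:synchEpistemic}. Once a correct alignment template is fixed, the equivalence is immediate in both directions: if $M$ is noninterfering then every pair of purge-equivalent traces induces paths of $K(M)$ satisfying the body of $\varphi_{NI}$, and conversely any violation of noninterference in $M$ yields two concrete paths of $K(M)$ whose non-high projections witness the negation of $\varphi_{NI}$, giving $K(M) \models \varphi_{NI}(G_H,G_L)$ iff $G_H$ does not interfere with $G_L$ in $M$.
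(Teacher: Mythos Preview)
Your plan diverges from the paper's and, as sketched, has two genuine gaps.

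First, the alignment template cannot work over the encoding $K(M)$ you describe. In HyperCTL the temporal operators advance \emph{all} paths in $\Pi$ in lockstep (the semantics uses $\Pi[i,\infty]$), so a subformula like $(h_\pi\vee h_{\pi'})\;\W\;(\neg h_\pi\wedge\neg h_{\pi'}\wedge\pi[0]\!=\!\pi'[0])$ never lets $\pi$ skip a position while $\pi'$ stays put: at every step both sit at the same index $i$, and the first time both are non-high the formula compares $\pi[i]$ with $\pi'[i]$, not with the intended earlier position of $\pi'$. The asynchronous-epistemic encoding you invoke achieves desynchronization only because it first adds explicit stutter transitions to the model; the $t,u$ flags merely mark positions and cannot shift one path relative to another. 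Your $K(M)$ has no stuttering.

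Second, even granting a correct alignment, restricting the output comparison to non-high positions is strictly weaker than Goguen--Meseguer noninterference, which requires $\out(w,G_L)=\out(\mathit{purge}(w),G_L)$ for \emph{every} finite $w$, including prefixes ending in a high action. Take a two-state machine where every low command resets to $s_0$ with low output $0$ and the high command moves to $s_1$ with low output $1$: noninterference fails already at $w=h$, yet every non-high position of every run carries output $0$, so your purge-aligned comparison is satisfied.

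The paper sidesteps both problems with a much lighter device. Instead of aligning two arbitrary purge-equivalent runs, its formula compares each path $\pi$ only to paths $\pi'$ that differ by \emph{exactly one} high action; a single $\LTLcircle$ then suffices to realign the two paths after the insertion point, and the required low-output equality extends to arbitrarily many high actions by transitivity of equivalence. This keeps $\varphi_{\mathit{NI}}$ a plain $\forall\pi.\forall\pi'$ formula with no auxiliary quantified propositions and no stuttering in $K(M)$.
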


The encoding of state machines into Kripke structures and the proof are straightforward and can be found in Appendix~\ref{app:noninterference}. 

A multitude of variations of the notion of secrecy have been
proposed~
\cite{Sabelfeld+Sands/2005/DimensionsAndPrinciplesOfDeclassification,Mantel/2000/PossibilisticDefinitionsofSecurityAnAssemblyKit,Banerjee+Naumann+Rosenberg/2008/ExpressiveDeclassificationPoliciesAndModularStaticEnforcement,Askarov+Myers/2010/ASemanticFrameworkForDeclassificationAndEndorsement,Mclean/1992/ProvingNoninterferenceAndFunctionalCorrectnessUsingTraces,Amtoft+Banerjee/2004/InformationFlowAnalysisInLogicalForm,Koepf+Basin/2007/AnInformationTheoreticModelForAdaptiveSideChannelAttacks,Zdancewic+Myers/03/ObservationalDeterminism}.
%
Even though
hyperproperties~\cite{Clarkson+Schneider/10/Hyperproperties} provide a
useful semantical framework to capture a broad range of these
information flow
properties,~\cite{Clarkson+Schneider/10/Hyperproperties} does not
provide any algorithmic approaches.
The technique of
self-composition~\cite{Barthe+DArgenio+Rezk/04/SecureInformationFlowBySelfComposition}
can be used to cast the verification of these information flow
properties in terms of safety (and sometimes in terms of liveness)
problems
\cite{Terauchi+Aiken/05/SecureInformationFlowAsSafetyProblem,Yasuoka+Terauchi/2012/QuantitativeInformationFlowAsSafetyAndLivenessHyperproperties}
by \emph{modifying the system}.  However, these encodings can be
rather involved (cf.~\cite{HuismanWS/06/TLCharacterisationOfOD}) and
hence, for many of the information flow variations, no algorithms
exist yet that allow for their precise verification.
%

In contrast, HyperCTL allows us to easily derive such algorithms for
many of the proposed notions of information flow without the need to
modify the system.  In the rest of this section, we illustrate the
expressiveness of HyperCTL by showing how to define non-interference
and also more modern definitions of security\footnote{The purpose of
  this section is not to argue for or against particular information
  flow policies.}.
From the results of Section~\ref{ssect:modelchecking} it follows that
the formulas presented in this section yield efficient model checking
algorithms, particularly considering the remark on efficiency.  These
examples suggest that our common logical framework of HyperCTL may
also lead to advances for the practical verification of security
policies.

\subsection{Security Definitions for Nondeterministic Systems}

A common concern of many pieces of work following the seminal work of
Goguen and Meseguer is the extension of noninterference to
nondeterministic systems
(e.g.~\cite{Zakinthinos+Lee/1997/AGeneralTheoryOfSecurityProperties,Mantel/2000/PossibilisticDefinitionsofSecurityAnAssemblyKit,Zdancewic+Myers/03/ObservationalDeterminism,Mclean/1992/ProvingNoninterferenceAndFunctionalCorrectnessUsingTraces}).
%
%
Different approaches differ in their assumptions on the source of
nondeterminism and discuss what requirements would make a
nondeterministic system intuitively secure.

\emph{Observational determinism}
\cite{Zdancewic+Myers/03/ObservationalDeterminism,Mclean/1992/ProvingNoninterferenceAndFunctionalCorrectnessUsingTraces}
is a conservative approach to nondeterminism that assumes that the
attacker does no only know the model, but can choose an arbitrary
implementation (refinement) thereof.  Essentially, the property requires that for a given input
all possible executions of the system look identical for an observer
independently of the choice of the secret and of how the
nondeterminism is resolved.  Huisman et
al.~\cite{HuismanWS/06/TLCharacterisationOfOD} encoded this
information flow property as a CTL* formula after transforming 
the system via
self-composition~\cite{Barthe+DArgenio+Rezk/04/SecureInformationFlowBySelfComposition}.
HyperCTL can naturally express properties over multiple paths
and thus the property can be expressed without the need for
modifications to the system.  See Appendix~\ref{app:OD} for details.


\newcounter{thm-obsDet}
\newcounter{aux}
\setcounter{thm-obsDet}{\value{theorem}}

\begin{theorem}\label{thm:observationalDeterminism}
  There is an encoding of programs as defined in 
  \cite{HuismanWS/06/TLCharacterisationOfOD} into Kripke
  structures, such that HyperCTL can express observational determinism. 
\end{theorem}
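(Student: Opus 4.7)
The plan is to mirror the approach used for noninterference in Theorem~\ref{thm:noninterference}: give a standard encoding of programs as Kripke structures and then express observational determinism directly with two universal path quantifiers, thereby avoiding the self-composition construction of Huisman et al.~\cite{HuismanWS/06/TLCharacterisationOfOD}.

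First, I would define the encoding $K(P)$. Given a program $P$ in the language of~\cite{HuismanWS/06/TLCharacterisationOfOD} with low variables $L$ and high variables $H$, let the states of $K(P)$ be pairs $(pc, \sigma)$ of a program counter and a valuation of the program variables, labeled with atomic propositions recording the current value of each variable (and any auxiliary low-observable events, such as termination or synchronization markers, required by the notion of observation). The transition relation $\delta$ is the small-step semantics of $P$, where every nondeterministic choice becomes a branching in $\delta$. A fresh initial state $s_0$ branches to all valid initial configurations, so that paths of $K(P)$ from $s_0$ correspond exactly to pairs (initial configuration, complete execution) of $P$.

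Second, I would express observational determinism as the HyperCTL formula
\[
\varphi_{\mathit{OD}} \;\equiv\; \forall \pi.\; \forall \pi'.\; \big(\pi[1]=_L \pi'[1]\big) \;\to\; \big(\pi[1,\infty] =_L \pi'[1,\infty]\big),
\]
where the shift by one position accounts for the artificial initial state $s_0$. This says that any two executions whose initial low-observable configurations agree must remain low-observably equivalent throughout, regardless of high inputs and of how nondeterminism is resolved, which is exactly the observational determinism condition. The correspondence between models of $\varphi_{\mathit{OD}}$ and programs satisfying observational determinism is then an unfolding of the definitions: the outer $\forall \pi.\forall \pi'$ ranges over all pairs of executions of $P$, the hypothesis picks out pairs with low-equivalent initial states, and the consequent is precisely low-trace equivalence.

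The main obstacle I anticipate is matching the exact formal notion of observation used in~\cite{HuismanWS/06/TLCharacterisationOfOD}. Observational determinism comes in several variants that differ in how they treat stuttering, termination, prefixes, and the granularity of the low observation (state-based vs.\ event-based). If the target notion requires stutter-insensitive equivalence of low traces (as in the Zdancewic--Myers style), the encoding must insert explicit stutter-tolerant labeling so that $=_L$ on states of $K(P)$ faithfully captures stutter-closed low-equivalence; alternatively, one can adapt the technique used in Theorem~\ref{thm:asynchEpistemic}, adding stutter steps to $K(P)$ and adjusting the formula accordingly. Once the observation relation is aligned between $P$ and $K(P)$, the remaining verification that $K(P) \models \varphi_{\mathit{OD}}$ iff $P$ satisfies observational determinism is routine.
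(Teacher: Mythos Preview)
Your proposal takes a genuinely different route from the paper. The paper's proof does not build a fresh encoding and formula from scratch; instead it \emph{reuses} Huisman et al.'s own encoding of programs into Kripke structures (the one that records both the current and the previous state) and their CTL* formula, and observes that since HyperCTL subsumes CTL*, one can drop the self-composition step and replace it by two leading path quantifiers $\forall\pi_1.\forall\pi_2$, redirecting the propositions that referred to the two program copies to refer to the two paths instead. To make the two independently quantified paths synchronise the way the two halves of the self-composed program did, the paper adds stutter steps to the (single) program's Kripke structure. The payoff of this approach is that the correctness of the resulting formula is inherited from Huisman et al.'s own correctness proof; nothing needs to be re-argued.

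Your direct approach can be made to work, but the formula you wrote, $\forall\pi.\forall\pi'.\,(\pi[1]=_L\pi'[1])\to(\pi[1,\infty]=_L\pi'[1,\infty])$, expresses pointwise low-trace equality, which is strictly stronger than the stutter-insensitive notion used in~\cite{HuismanWS/06/TLCharacterisationOfOD}. You correctly identify this as the main obstacle, but you do not resolve it: neither the ``stutter-tolerant labeling'' sketch nor the appeal to Theorem~\ref{thm:asynchEpistemic} is spelled out, and getting it right is precisely where the work lies (and is the reason Huisman et al.'s encoding remembers the previous state). So your proposal is a reasonable plan, but as written it is less complete than the paper's lifting argument, and it forgoes the chief advantage of that argument, namely that the hard correctness step is already done in~\cite{HuismanWS/06/TLCharacterisationOfOD}.
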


The proof shows that self-composition can be replaced by quantifying
over two paths.


\subsection{Security Lattices} 

It is common in security to allow for multiple hierarchically ordered
security levels that are (statically) assigned to users and/or data.
HyperCTL can express these concerns by a conjunction of multiple
secrecy statements that separate the security levels.  The formulation
of security lattices is independent of the precise notion of secrecy
chosen.

\subsection{Declassification}

Many security critical systems reveal secret information partially or
under certain conditions as their normal mode of operation.  For
example, a piece of software that checks passwords must reveal whether
the entered password is correct or not.  Noninterference, though,
categorically rules out such behavior because it leaks some
information to an attacker. Thus, weaker notions of security have been
proposed~\cite{Sabelfeld+Myers/2004/AModelForDelimitedRelease,Banerjee+Naumann+Rosenberg/2008/ExpressiveDeclassificationPoliciesAndModularStaticEnforcement,Askarov+Myers/2010/ASemanticFrameworkForDeclassificationAndEndorsement,Dimitrova+Finkbeiner+Kovacs+Rabe+Seidl/12/SecLTL}
(see~\cite{Sabelfeld+Sands/2005/DimensionsAndPrinciplesOfDeclassification}
for more literature).  There is a variety of patterns to integrate
exception into information flow policies, which are referred to as the
\emph{dimensions} of
declassification~\cite{Sabelfeld+Sands/2005/DimensionsAndPrinciplesOfDeclassification}.

The example of the password checking could be generalized to the
pattern (or dimension) that a predefined fact about the secret may be
released,
e.g.~\cite{Sabelfeld+Myers/2004/AModelForDelimitedRelease}.
Other patterns are that after a certain condition has been fulfilled
information may be released (for example, after access has been
granted)~\cite{Dimitrova+Finkbeiner+Kovacs+Rabe+Seidl/12/SecLTL,Broberg+Sands/2006/FlowLocksTowardsACoreCalculusForDynamicFlowPolicies},
or that some input must be considered secret only under certain
conditions (for example, the keyboard input is only secret while the cursor is
in a password field)~\cite{Dimitrova+Finkbeiner+Kovacs+Rabe+Seidl/12/SecLTL}.



\begin{theorem}
HyperCTL subsumes the declassification properties encoded in \cite{Balliu+Dam+Guernic/2011/EpistemicTemporalLogicForIFSecurity}. 
\end{theorem}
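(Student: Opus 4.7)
The plan is to reduce the statement to the subsumption results for epistemic temporal logic already established in Theorems~\ref{thm:synchEpistemic} and~\ref{thm:asynchEpistemic}. Balliu, Dam, and Guernic formulate their declassification policies (delimited release, gradual release, nondisclosure, and so on) as formulas in epistemic temporal logic with the knowledge operator ${\sf K}_i$ over interpreted systems, possibly combined with standard temporal operators and with extra atomic propositions acting as triggers or declassification labels. So the overall route is: take any declassification policy from their catalog, view it as an epistemic temporal logic formula $\psi$ over an interpreted system $\I$, and then apply our translation to obtain an equivalent HyperCTL formula $\varphi$ over the Kripke structure $K(\I)$ (or $K'(\I)$ in the asynchronous case).

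Concretely, I would first recall their encoding style: a declassification policy is expressed by a Boolean combination of epistemic formulas ${\sf K}_a \psi$ relativized to trigger propositions that mark when something may be released (e.g.\ ``after a \texttt{declassify} event, the attacker may know the parity of the secret''). The signature of each such formula fits exactly into the fragment handled by Theorem~\ref{thm:synchEpistemic}, since the time model used in~\cite{Balliu+Dam+Guernic/2011/EpistemicTemporalLogicForIFSecurity} is synchronous. I would therefore prepend a universal path quantifier to $\psi$, put the resulting formula in prenex NNF, and apply the step-wise elimination of knowledge operators from Theorem~\ref{thm:synchEpistemic}, iterating from innermost to outermost. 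Each occurrence of ${\sf K}_a \chi$ (positive or negative) is replaced by the auxiliary proposition $u_\pi$, guarded by the universally (resp.\ existentially) quantified witness path $\pi'$ and the signal $t_{\pi'}$ that selects the time index at which the knowledge check must hold; the triggers and declassification flags carry over unchanged as ordinary atomic propositions.

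Correctness of the translation is inherited from the correctness argument of Theorem~\ref{thm:synchEpistemic}: at each step, $\I \models \psi$ iff $K(\I) \models \varphi$ after one knowledge operator is eliminated, so by induction on the number of knowledge operators in the original policy we obtain the desired equivalence for the full formula. Boolean combinations of epistemic formulas, which is the shape of all the declassification policies in~\cite{Balliu+Dam+Guernic/2011/EpistemicTemporalLogicForIFSecurity}, are handled conjunct-by-conjunct since HyperCTL specifications are closed under Boolean combinations.

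The main obstacle I expect is bookkeeping rather than depth. First, some of the policies in~\cite{Balliu+Dam+Guernic/2011/EpistemicTemporalLogicForIFSecurity} are stated over \emph{sets} of initial configurations parametric in the secret, and one has to verify that the Kripke structure $K(\I)$ built from the interpreted system faithfully represents this family of initial states; if not, one adds a fresh initial state fanning out to all relevant initial configurations, exactly as in the construction preceding Theorem~\ref{thm:synchEpistemic}. Second, one must be careful about the agent projection underlying ${\sf K}_a$: the set $P\subseteq \AP$ used in our knowledge semantics must coincide with the observable projection of agent $a$ in their interpreted system, which is a purely notational matching because the labeling $L'$ of $K(\I)$ was defined to include the local observations of every agent. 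Once both points are addressed, the result follows directly from the already-proven subsumption of epistemic temporal logic.
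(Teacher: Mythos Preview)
Your proposal is correct and follows essentially the same route as the paper: the paper's argument is simply that the declassification properties in~\cite{Balliu+Dam+Guernic/2011/EpistemicTemporalLogicForIFSecurity} are already encoded as epistemic temporal logic formulas, and HyperCTL subsumes epistemic temporal logic by Theorems~\ref{thm:synchEpistemic} and~\ref{thm:asynchEpistemic}. Your version spells out the bookkeeping (prenex NNF, agent projections, initial-state fan-out) in more detail than the paper does, but the underlying reduction is identical.
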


The result follows from the encoding~\cite{Balliu+Dam+Guernic/2011/EpistemicTemporalLogicForIFSecurity} of declassification
properties
into epistemic temporal
logic~\cite{Fagin+Halpern+Moses+Vardi/1995/ReasoningAboutKnowledgeBook}
and the fact that HyperCTL subsumes epistemic temporal logic
(see Section~\ref{sect:relations}).
Appendix~\ref{app:declassification} contains a discussion of the
example of a password checker above to show that declassification
properties can indeed be expressed elegantly in HyperCTL.


\subsection{Quantitative Information Flow}

In cases where leaks cannot be prevented absolutely, we might want to limit the \emph{quantity} of the lost information~\cite{Gray/1991/TowardAMathematicalFoundationForIFSecurity,Clark+Hunt+Malacaria/2002/QuantitativeAnalysisOfTheLeakageOfConfidentialData,Koepf+Basin/2007/AnInformationTheoreticModelForAdaptiveSideChannelAttacks,Smith/2009/OnTheFoundationsOfQantitativeInformationFlow,Yasuoka+Terauchi/2010/OnBoundingProblemsOfQuantitativeInformationFlow}. 
First of all, this quantification requires to measure information by some definition of \emph{entropy}. It turns out that for different practical scenarios different notions of this measure are appropriate. 
To define meaningful measures of entropy of a secret, secrets are generally assumed to be chosen randomly and independently and the system under consideration is assumed to be deterministic (not nondeterministic, that is). 
The essence of quantitative information flow, though, can be
formulated in a uniform way for all measures of entropy. 
All we have to assume is a fixed measure of entropy ${\mc H}(X)$ for a random variable $X$ and a notion of conditional entropy ${\mc H}(X|Y=y)$ indicating the entropy of $X$ under the assumption that the result of a second (possibly correlated) random experiment $Y$ is known to be $y$.

Leaks can then be measured in terms of \emph{lost entropy}, that is, the difference between the initial entropy of the secret $H$, and the remaining entropy when observing the output~$O$: ${\mc H}(H) - {\mc H}(H|O)$. 
Formally the \emph{bounding problem} of quantitative information flow is then defined as to determine whether this difference is bounded by a given constant~\cite{Smith/2009/OnTheFoundationsOfQantitativeInformationFlow,Yasuoka+Terauchi/2010/OnBoundingProblemsOfQuantitativeInformationFlow}.

\paragraph*{Min Entropy}
The min entropy measure concerns the highest probability that an attacker correctly guesses the secret in a single try and it is considered to be one of the most appropriate indications of how useful the observations are for the attacker~\cite{Smith/2009/OnTheFoundationsOfQantitativeInformationFlow}.
The min entropy of a random variable $H$ is defined as $-\log \max_h p(H=h),$
where $p$ indicates the probability measure. 
The conditional min entropy is defined as $-\log {\sum_{o}p(O=o) (\max_h p(H=h|O=o))}$, which, by Bayes' theorem, becomes $
 -\log {\sum_{o}\max_h p(O=o|H=h)p(H=h)}$. 

Under the assumption that the system under consideration is deterministic $p(O=o|H=h)$ is either 0 or 1 and further $p(H=h)$ is a constant $c_H$ if we assume the secret to be uniformly distributed. 
The bounding problem of quantitative noninterference for the min entropy measure and for a given bound $n$ hence simplifies to the following question:
\[
 $n$ ~~\geq~~ - \log c_H ~+~ \log \Big(~c_H{\sum_{o,~\exists h. p(O=o|H=h)=1} 1}~\Big),
\]
and amounts to specifying that there is no tuple of $2^n+1$ distinguishable
paths. (see~\cite{Smith/2009/OnTheFoundationsOfQantitativeInformationFlow}
for more detailed discussions.)  In a synchronous system setting
(cf.~Section~\ref{ssect:examples}), where $I\subseteq\AP$ and
$O\subseteq\AP$ refer to the atomic propositions indicating the input
and output, respectively, we can, thus formulate this condition as
follows:
\[
\varphi_{\mathit{QIF}}=\neg\exists\pi_0.\;\dots\;.~\exists\pi_{2^n}.~ \Big(\bigwedge_{i}\pi_i=_{I}\pi_0\Big) \wedge \bigwedge_{i\not=j}\pi_i\not=_{O}\pi_j
\]
Yasuoka and Terauchi proved that $2^{n+1}$ is the optimal number of
paths to formulate this property
\cite{Yasuoka+Terauchi/2010/OnBoundingProblemsOfQuantitativeInformationFlow}.

\section{Conclusions and Open Problems}
\label{sect:conclusions}

We have introduced the temporal logic HyperCTL, which extends
CTL* with path variables. HyperCTL provides a uniform logical
framework for temporal hyperproperties.  The quantification over paths
in HyperCTL subsumes related extensions of temporal logic such as the
\emph{hide} operator of SecLTL and the \emph{knowledge} operator of
temporal epistemic logic.  Beyond the properties expressible in these
temporal logics, HyperCTL expresses a rich set of properties from
security and coding theory, such as noninterference and the
preservation of entropy. The encoding of these properties in HyperCTL
is natural and efficient: for many security properties the complexity
of the HyperCTL model checking algorithm matches the lower bounds
known for the individual properties.
%
%
HyperCTL thus not only provides a common semantic foundation for
properties from a diverse class of applications but also a uniform
algorithmic solution.

An important open question concerns the identification of efficient
fragments of HyperCTL, motivated in particular by the significant gap
between the PSPACE-complete model checking problem for CTL* and the
non-elementary model checking problem of general HyperCTL
formulas. 

From a semantic perspective, it is worth noting that the
expressiveness of HyperCTL exceeds the original definition of
hyperproperties~\cite{Clarkson+Schneider/10/Hyperproperties}. While
Clarkson and Schneider's hyperproperties are sets of sets of
sequences, and thus have their semantic foundation in the linear-time
paradigm of sequences, HyperCTL integrates, like CTL*, the linear-time
view with the branching-time view. When path quantifiers occur in the
scope of temporal operators, the HyperCTL formula expresses that the
new paths branch off in the same state of the Kripke structure. A full
investigation of the linear vs branching time spectrum of hyperproperties
and the implications on the complexity, compositionality, and
usability of logics for hyperproperties are intriguing challenges for
future research.



\bibliographystyle{plain}
\bibliography{bib_short}

\newpage

\appendix

\section{Epistemic Logics Without the Synchronous Time Assumption}
\label{app:asynchEpistemic}

We now discuss the adaptations necessary to extend the encoding from the proof of Theorem~\ref{thm:synchEpistemic} to non-synchronous systems. The following proof of Theorem~\ref{thm:asynchEpistemic} is also a good example for the general treatment of asynchronous executions in HyperCTL. 

\setcounter{aux}{\value{theorem}}
\setcounter{theorem}{\value{thm-asynchEpistemic}}

\begin{theorem}
For every epistemic temporal logic formula $\psi$ and interpreted system $\I$, there  is a HyperCTL formula $\varphi$ such that 
$\I\models\psi$ iff $K'(\I)\models\varphi$. 
\end{theorem}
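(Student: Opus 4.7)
The plan is to extend the synchronous encoding of Theorem~\ref{thm:synchEpistemic} by simulating the $\trace$ operation directly in the formula, using a stuttered variant of the Kripke structure. Under the asynchronous semantics, the knowledge of agent $i$ at step $n$ depends on the sequence of observation changes of $i$ rather than on their absolute timing, so the added stuttering in $K'(\I)$ gives sufficient slack to align pairs of runs whose $i$-traces coincide while leaving the set of runs of $\I$ recoverable.

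First I would construct $K'(\I)$ from $K(\I)$ by attaching, at every non-initial state $s$, an auxiliary self-loop that preserves $L'(s)$, and by introducing a fresh atomic proposition $\tau$ that holds exactly on edges inherited from the original runs of $\I$ and fails on the new stutter edges. Each run $r$ of $\I$ is then in bijection with an entire family of paths in $K'(\I)$ that differ only in the placement of inserted stutter steps, and the original run is read off by restricting to $\tau$-positions. In particular, all atomic propositions referenced by $\psi$ become stutter-invariant along $\tau$-positions of $K'(\I)$.

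Next I would revisit the prenex-normal-form translation for ${\sf K}_P \psi$ from the synchronous proof and modify it in two places. (i) The universal (resp. existential) path quantifier over $\pi''$ is reinterpreted over paths of $K'(\I)$, and a pair of fresh existential path variables is used to pick stuttered representatives of $\pi_n$ and $\pi''$ on which the change-points of the $P$-observation line up. (ii) The equality condition $\LTLsquare(t_{\pi'} \to (\pi_n[0] =_P \pi''[0]))$ is replaced by a formula expressing trace-equivalence of the $P$-projections: at every $\tau$-position of the chosen representative of $\pi_n$, the $P$-label must match the $P$-label at the corresponding $\tau$-position of the representative of $\pi''$, and both representatives must exhibit the same total number of $\tau$-positions inside the $t$-prefix. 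The signals $t$ and $u$ continue to play the same role as in the synchronous proof, now applied to the stuttered paths.

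The main obstacle is expressing trace equivalence; a direct formulation seems to require ranging over alignments of the two change sequences, which is second-order. The stuttering trick in $K'(\I)$ circumvents this by letting us quantify over \emph{stuttered representatives} of each path, so that on these representatives trace equivalence collapses to the pointwise equality already used in the synchronous encoding. Once this is in place, correctness follows the synchronous argument step by step, invoking the bijection between runs of $\I$ and $\tau$-filtered paths of $K'(\I)$ and the fact that the $P$-observations of agent $i$ agree along the $\tau$-positions of two paths precisely when the corresponding runs have identical $\trace_{i}$-prefixes.
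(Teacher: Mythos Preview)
Your approach is essentially the paper's: introduce stutter steps in $K'(\I)$, mark them with a fresh proposition, and exploit the resulting slack to align trace-equivalent runs so that the synchronous pointwise comparison of Theorem~\ref{thm:synchEpistemic} can be reused. The paper's execution is simpler in two respects. Rather than adding a pair of fresh existential path variables for ``stuttered representatives'' of $\pi_n$ and $\pi''$ (which forces you to also express that each representative has the same underlying run as the original---your condition on matching numbers of $\tau$-positions, which is awkward to state in HyperCTL), the paper keeps the quantifier skeleton of the synchronous encoding intact and merely conjoins two side conditions to the $\pi''$-clause: a fairness constraint $\mathsf{progress}(\pi'')=\LTLsquare\LTLdiamond\,\text{move}_{\pi''}$ ruling out eternal stuttering (a point you rely on implicitly when invoking the bijection with runs of $\I$, but never state), and a synchronization constraint $\mathsf{synch}(\pi,\pi'')=\LTLsquare\big(\pi[0]\neq_P\LTLcircle\pi[0]\Leftrightarrow\pi''[0]\neq_P\LTLcircle\pi''[0]\big)$ forcing the $P$-observations of the two paths to change in lockstep. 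With these two conditions in place, the original pointwise equality $\LTLsquare(t_{\pi'}\to\pi_n[0]=_P\pi''[0])$ already captures trace equivalence, so no extra quantifiers and no counting of $\tau$-positions are needed.
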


\setcounter{theorem}{\value{aux}}

\begin{proof}
First, we define the model transformation $K'(\I)$, which differs from $K(\I)$ in that it adds stuttering steps. In the resulting Kripke structure, we indicate for each transition whether it is due to a step of the interpreted system or if it is a newly added stuttering step.

For a given interpreted system $\I$, the stuttering Kripke structure $K'(\I)=(S,s_0,\delta,\AP,L)$ is defined via the non-stuttering Kripke structure $K(\I)=(S',s'_0,\delta',\AP',L')$, where $S=S'\times\{\text{stutter},\text{move}\}$, $s_0=(s_0,\text{stutter})$, $\delta(\langle s,x\rangle )=\delta'(s)\times\{\text{move}\}\cup \{\langle s,\text{stutter}\rangle \}$, $\AP=\AP\cup\{\text{stutter},\text{move}\}$, and $L(\langle s,x\rangle )=L'(s)\cup\{x\}$. 

In $K'(\I)$, the path quantifiers range over a larger set of paths that we have to restrict to those that do not stutter forever ($\mathsf{progress}(\pi)=\LTLsquare\LTLdiamond\text{move}_\pi$) and to those pairs of paths that synchronize correctly. 
The correct synchronization of two paths $\pi$ and $\pi'$ means that the observations regarding some set of propositions $P$ changes in one path iff it changes also in the other path:
$\mathsf{synch}(\pi,\pi')=\LTLsquare\Big( \pi[0]\not=_P\LTLcircle\pi[0] ~~\Leftrightarrow~~ \pi'[0]\not=_P\LTLcircle\pi'[0]\Big)$. 

The encoding from the proof of Theorem~\ref{thm:synchEpistemic} is now modified as follows for positively occurring knowledge operators:
\[
\begin{array}{l}
\mathbf{Q}. \exists\pi. ~\varphi'|_{{\sf K}_P\psi\to u_{\pi}}\;\wedge\;
\big(
\forall\pi'.~ (t_{\pi'}\U\LTLsquare \neg t_{\pi'}) ~\to~ \forall\pi''.~  \mathsf{progress}(\pi'') ~\wedge~ \mathsf{synch}(\pi,\pi'') ~\wedge~ \\\qquad  \LTLsquare(t_{\pi'}\to (\pi_n[0]\!=_P\!\pi''[0])) ~\to~ \LTLsquare(t_{\pi'}\wedge u_{\pi}\to [\psi]_{\pi''})\big)\;,
\end{array}
\]
and, for negatively occurring knowledge operators:
\[
\begin{array}{l}
\mathbf{Q}. \exists\pi. ~\varphi'|_{{\sf K}_P\psi\to u_{\pi}}\;\wedge\;
\big(\forall\pi'.~ (t_{\pi'}\U\LTLsquare \neg t_{\pi'}) \to \exists\pi''.~ \mathsf{progress}(\pi'') ~\wedge~ \mathsf{synch}(\pi,\pi'') ~\wedge~ \\\qquad \LTLsquare(t_{\pi'}\to (\pi_n[0]\!=_P\!\pi''[0]))  ~\wedge~ \LTLdiamond(t_{\pi'}\!\wedge u_{\pi}\wedge \neg [\psi]_{\pi''})\big)\;,
\end{array}
\]

For a positively occurring knowledge operator $\mathcal{K}_P\psi$, we prove that given two executions $\pi_1$ and $\pi_2$, if the sub-formula $\psi$ must hold on $\pi_2$ at some position $i$ (by the semantics of the knowledge operator), then the formula above requires that on a stuttered version of $\pi_1$ the sub-formula is applied at state $\pi_2[i]$. 
We consider the prefixes of the two executions, $\pi_1[0,i]$ and $\pi_2[0,j]$ and assume they have the same traces with respect to $P\subseteq\AP$. 
For these, there are two stuttered versions $\pi$ and $\pi''$ (named to match the path variables in the formula above) that synchronize the positions at which they change their observations with respect to $P$ and pad the prefixes to the same length $k$ (without changing their final states, that is $\pi_2[i]=\pi''[k]$). 
There is also a labeling with $t_{\pi'}$ that ensures that the knowledge operator is evaluated until position $k$. 
Hence, $\psi$ is applied on $\pi''$ at state $\pi_2[i]$. 

The other direction is straightforward, and the case of a negatively occurring knowledge operator follows similarly. 
\end{proof}

\section{Goguen and Meseguer's Noninterference}\label{app:noninterference}

\subsection{Noninterference}

A point of reference for most of the literature on information flow is the definition of \emph{noninterference} that was introduced by Goguen and Meseguer in 1982 \cite{Goguen+Meseguer/1982/SecurityPoliciesAndSecurityModels}. 
In this subsection, we show how to express noninterference in a simple HyperCTL formula. 

The system model used in \cite{Goguen+Meseguer/1982/SecurityPoliciesAndSecurityModels}, which we will refer to as \emph{deterministic state machines}, operates on commands $c\in C$ that are issued by different users $u\in U$. 
The evolution of a deterministic state machine is governed by the transition function $\DO:S\times U\times C\to S$ and there is a separate observation function $\out:S\times U\to \Out$ that for each user indicates what he can observe. 


We define standard notions on sequences of users and events. 
For $w\in(U\times C)^*$ and $G\subseteq U$ let $|w|_{G}$ denote the projection of $w$ to the commands issued by the users in $G$. 
Further, we extend the transition function $\DO$ to sequences, $\DO(s,(u,c).w)=\DO(\DO(s,u,c),w)$, where the dot indicates concatenation. 
Finally, we extend the observation function $\out$ to sequences $w$, indicating the observation \emph{after} $w$: $\out(w,G)=\out(\DO(s_0,w),G)$.
\emph{Noninterference} is then defined as a property on systems $M$. A set of users $G_H\subseteq U$ does not interfere with a second group of users $G_L\subseteq U$, if:
\[
\forall w\in (U\times C)^*. ~\out(w,G_L)=\out(|w|_{G_H},G_L)~.
\]

That is, we ask whether the same output would be produced by the system, if all actions issued by any user in $G_H$ were removed.

\subsection{Encoding GM's System Model}

First, we need to map their system model into Kripke structures as used for the formulation of HyperCTL (which, obviously, must not solve problem by itself). 
We choose an intuitive encoding of state machines in Kripke structures that indicates in every state (via atomic propositions) which observations can be made for the different users, and also which action was issued last, including the responsible user.

We choose a simple translation that maps a state machine  $M=(S,U,C,\Out,\out,\DO,s_0)$ to the Kripke structure $K=(S',s'_0,\delta,\AP,L)$, where $S'=\{s_0\}\cup S\times U\times C$, $s'_0=s_0$, $\AP= U \times C ~\cup~ U\times\Out$, and the labeling function is defined as $L(s_0)=\{(u,\out(s_0,u))\mid u\in U\}$ for the initial state and $L((s,u,c))=\{(u,c)\}\cup\{(u',\out(s,u'))\mid u'\in U\}$ for all other states. 

The transition function is defined as 
\[
\delta(s,u,c) = \{(s',u',c') \mid \DO(s,u',c')=s'\!,~ u'\!\in U,~ c'\!\in C \}
\]

Each state (except for the initial state) has labels indicating the command that was issued last, and the user that issued the command. 
The remaining labels denote the observations that the individual users can make in this state. 

To access these two separate pieces of information, we introduce  functions $\input:S\to U\times C$, which is not defined for $s_0$, and $\out:S\times U\to\Out$ (by abusing notation slightly), with their obvious meanings. 

In this system model let $s=_{U\setminus G_H}s'$ mean that if we entered one of the states $s$ and $s'$ with a user not in $G_H$ the users and commands must be identical. 
If both states are entered by users in $G_H$ then the commands may be different. 
The output equality on states $s=_{O,G_L}s'$, shall refer to the observations the users $G_L$ can make at this state. 
Then, noninterference can be expressed as follows:
\[
\begin{array}{l}
\varphi_{\mathit{NI}}(G_H,G_L)~=~ \forall \pi. \forall\pi'.~ \pi[0]=_{U\setminus G_H}\pi'[0] ~~\W~~
\Big( \pi[0]\not=_{U\setminus G_H}\pi'[0] ~\wedge~ \\ 
\qquad\qquad\qquad\qquad \big(~\pi[0]\in H ~\wedge~ \LTLsquare(\LTLcircle\pi[0]=_{U\setminus G_H}\pi'[0]) ~\implies~ \LTLsquare(\LTLcircle\pi[0]=_{O,G_L}\pi'[0]) ~\big)\Big) ~,
\end{array}
\]

\begin{theorem}\label{thm:noninterference_app}
There is a HyperCTL formula $\varphi_{\mathit{NI}}(G_H,G_L)$ and an encoding $K(M)$ of state machines into Kripke structures such that for every state machine $M$, and groups of users $G_H$ and $G_L$ it holds $K(M)\models\varphi_{\mathit{NI}}(G_H,G_L)$ iff $G_H$ does not interfere with $G_L$ in $M$.  
\end{theorem}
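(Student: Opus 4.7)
The plan is to prove both directions by exploiting the natural correspondence between finite command sequences $w \in (U \times C)^*$ and finite path prefixes in $K(M)$: a sequence $w$ determines a unique path prefix whose $k$-th non-initial state carries the $(u,c)$-label of the $k$-th action together with the observation labels of the resulting state, and conversely every path from $s_0$ reads off a command sequence. Totality of the transition relation of $K(M)$ ensures that every finite $w$ extends to infinite paths, so the quantifications in $\varphi_{\mathit{NI}}$ range over all command sequences of interest.

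For the direction noninterference $\Rightarrow K(M) \models \varphi_{\mathit{NI}}$, fix two paths $\pi, \pi'$ from $s_0$. If they agree on non-$G_H$ input labels at every position, the weak until holds trivially by its safety clause. Otherwise, let $i$ be the first position of disagreement; the conjunct $\pi[0]\neq_{U\setminus G_H}\pi'[0]$ of the weak-until target is then satisfied at $i$, and it remains to establish the implication in the inner formula. Under its premise, $\pi$ performs a $G_H$ command at position $i$ and both paths match on non-$G_H$ inputs at every position $k > i$, so the command sequences $w_\pi, w_{\pi'}$ read off from $\pi, \pi'$ have identical $G_H$-projections on each prefix past position $i$. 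Two applications of the noninterference hypothesis then yield $\out(w_\pi[0..k], G_L) = \out(w_{\pi'}[0..k], G_L)$ for every $k > i$, which is exactly the consequent.

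For the converse $K(M) \models \varphi_{\mathit{NI}} \Rightarrow$ noninterference, the argument proceeds by induction on the number of $G_H$ commands in $w$. If $w$ contains no $G_H$ commands, the projection leaves $w$ unchanged and the equality is immediate. Otherwise, pick some $G_H$ command in $w$ at position $j$, let $w'$ be the sequence obtained by deleting it, and extend $w$ and $w'$ to infinite paths $\pi$ and $\pi'$ that agree on all non-$G_H$ inputs outside position $j$ (possible by totality of $\delta$). Then $j$ is the first divergence in the weak until of $\varphi_{\mathit{NI}}$ instantiated on $\pi, \pi'$, the premise of the inner implication is met, and the consequent forces agreement of $G_L$-observations at all positions past $j$; the inductive hypothesis applied to $w'$, which has one fewer $G_H$ command than $w$, closes the step.

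The principal obstacle is that $\varphi_{\mathit{NI}}$ only compares paths position-wise while noninterference compares traces of different lengths after $G_H$-projection: a single invocation of $\varphi_{\mathit{NI}}$ mediates the removal of exactly one $G_H$ event, and the full statement must be recovered by the induction above on the number of $G_H$ events in $w$. A minor technicality is the boundary position $0$, where $s_0$ has no incoming action label; because both paths start at $s_0$ they cannot diverge there, so the first index $i$ of disagreement is always at least $1$ and the interpretation of $\pi[0] \in H$ is well defined at the point where it is consulted.
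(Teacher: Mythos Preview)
Your overall strategy matches the paper's: the converse proceeds by peeling off one $G_H$ command at a time, which is exactly the paper's ``by transitivity of equivalence, we compare all paths $\pi$ to a version of itself that is stripped of all secret actions,'' recast as an induction.

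There is, however, a genuine gap in how you read the inner subformula. In $\LTLsquare(\LTLcircle\pi[0]=_{U\setminus G_H}\pi'[0])$ the $\LTLcircle$ applies only to $\pi[0]$, not to the whole equality; evaluated from position $i$ this asserts $\pi[k{+}1]=_{U\setminus G_H}\pi'[k]$ for all $k\ge i$, i.e.\ an \emph{offset} comparison, and likewise for the consequent. Your descriptions are consistently position-wise (``both paths match on non-$G_H$ inputs at every position $k>i$'', ``$\out(w_\pi[0..k],G_L)=\out(w_{\pi'}[0..k],G_L)$'', ``agree on all non-$G_H$ inputs outside position $j$''). Under that reading neither direction goes through: in the forward direction the non-$G_H$ projections of the two prefixes of equal length are not equal (since $\pi[i]$ is a $G_H$ action while $\pi'[i]$ is not, they have different lengths after purging), and in the converse you cannot extend $w$ and the strictly shorter $w'$ to infinite paths that agree position-wise while still realising those finite prefixes. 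The paper singles out precisely this point: ``the interesting part here is the use of the next operator, as it enables the comparison of different positions of the traces.''

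Once you correct the reading to the offset comparison, your argument works essentially as written: in the forward direction the non-$G_H$ projections of $w_\pi[1..k{+}1]$ and $w_{\pi'}[1..k]$ coincide, and two applications of noninterference give $\pi[k{+}1]=_{O,G_L}\pi'[k]$; in the converse the shift aligns $w'$ with $w$ minus one position. One further detail: you should choose the deleted $G_H$ command so that the resulting position really is the first $=_{U\setminus G_H}$-divergence of $\pi,\pi'$ (for instance, always delete the last $G_H$ command in $w$); if the next action after the deleted one is also in $G_H$, the weak-until does not release where you expect.
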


\begin{proof}
The formula pattern $\pi[0]=_{U\setminus G_H}\pi'[0] ~\W~ \big(\pi[0]\not=_{U\setminus G_H}\!\pi'[0] ~\wedge~ \varphi \big)$ implies that $\varphi$ is applied exactly at the first position at which $\pi$ and $\pi'$ differ in their input (except, possibly, on the input of users in $G_H$). 
As, for a fixed path $\pi$, we quantify over all paths $\pi'$ the subformula $\varphi$ is hence applied to every position of every path $\pi$. 

The subformula $\varphi$ then requires that, if from that position on $\pi$'s input differs from the input of $\pi'$ only in that it has an additional (secret) action by some user in $G_H$, both paths must look equivalent from the view point of the users in $G_L$. 
The interesting part here is the use of the next operator, as it enables the comparison of different positions of the traces. 
Thus, we compare path $\pi$ to all other paths that have one secret action less than itself. 
Hence, by transitivity of equivalence, we compare all paths $\pi$ to a version of itself that is stripped of all secret actions. 
\end{proof}

\section{Observational Determinism}
\label{app:OD}

\setcounter{aux}{\value{theorem}}
\setcounter{theorem}{\value{thm-obsDet}}

\begin{theorem}
  There is an encoding of programs as defined in 
  \cite{HuismanWS/06/TLCharacterisationOfOD} into Kripke
  structures, such that HyperCTL can express observational determinism. 
\end{theorem}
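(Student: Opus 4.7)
The plan is to encode the programs of \cite{HuismanWS/06/TLCharacterisationOfOD} into Kripke structures in the natural way: each program configuration becomes a Kripke state whose labels record the current values of the program variables, partitioned into the low-observable variables $L$ and the high (secret) variables $H$, and the nondeterministic program transition relation is taken as $\delta$. Under this encoding the paths of the Kripke structure are in one-to-one correspondence with the executions of the program, so statements about the low-observable behavior of executions translate directly into statements about labels along paths.

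Observational determinism, as formulated in \cite{HuismanWS/06/TLCharacterisationOfOD}, asserts that any two executions started from low-equivalent initial states agree on their low projections (up to stuttering, in the standard variant). In that paper the property is expressed by a CTL* formula interpreted over the self-composition of the program, so that a single path of the composed system simulates a pair of paths of the original. In HyperCTL no such composition is needed, because quantification over two paths of the original Kripke structure is already available in the logic. The strict, non-stuttering version of the property is captured by
\[
\forall \pi.\; \forall \pi'.\; \big(\pi[0]\!=_L\!\pi'[0]\big) \to \pi\!=_L\!\pi',
\]
using the abbreviations introduced in Section~\ref{sect:logic}; for the stuttering variant I would combine this with the $\trace$-style projection used in the treatment of knowledge and the $\mathsf{synch}$/$\mathsf{progress}$ idioms employed in the proof of Theorem~\ref{thm:asynchEpistemic} to align the two paths precisely on their low-observable changes.

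The correctness argument is then essentially a bijection: every pair of paths of the encoded Kripke structure that witnesses (or refutes) the HyperCTL formula corresponds to a single path of the self-composed program from \cite{HuismanWS/06/TLCharacterisationOfOD} with the same status with respect to their CTL* characterisation, and vice versa. The main obstacle I anticipate is that the original formulation depends on a carefully chosen notion of low-observational equivalence, typically some form of stuttering equivalence on infinite traces, and one has to pick the HyperCTL encoding so that the two universally quantified paths are compared under exactly that notion; the stuttering case requires the most care and is handled as sketched above. Once the equivalence notion is matched, the theorem follows immediately, and the proof makes explicit that HyperCTL's native multi-path quantification is precisely what allows us to avoid the self-composition construction used by Huisman et al.
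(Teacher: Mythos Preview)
Your proposal is correct in spirit and reaches the same conclusion---replace self-composition by two HyperCTL path quantifiers---but the paper's proof takes a more direct route. Instead of writing a fresh HyperCTL formula, the paper \emph{reuses Huisman et al.'s CTL\textsuperscript{*} formula verbatim}: it takes their Kripke encoding (which, importantly, records not only the current but also the \emph{previous} state, precisely so that low-observable changes can be detected by a state formula), drops the self-composition, adds stutter transitions to the single program so that independent progress of the two paths is still possible, and then simply prepends $\forall\pi_1.\forall\pi_2$ and redirects the propositions that referred to the two self-composed copies to $\pi_1$ and $\pi_2$. Correctness then follows immediately from the CTL\textsuperscript{*}\,$\subseteq$\,HyperCTL inclusion and from Huisman et al.'s own correctness proof.

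Your approach instead rebuilds the property from first principles: a natural encoding without the last-state memory, the plain $\forall\pi.\forall\pi'.(\pi[0]\!=_L\!\pi'[0])\to\pi\!=_L\!\pi'$ formula for the strict case, and the $\mathsf{synch}/\mathsf{progress}$ machinery from Theorem~\ref{thm:asynchEpistemic} for the stuttering variant. This works, but it obliges you to re-argue that your stuttering alignment exactly matches the low-equivalence notion of \cite{HuismanWS/06/TLCharacterisationOfOD}, which you yourself flag as the delicate point. The paper sidesteps that obligation entirely by inheriting Huisman et al.'s formula and hence their equivalence notion; what it buys is a two-line proof, at the cost of depending on the specific encoding (with last-state memory) from that paper. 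Your version is more self-contained and would generalize more easily to other formulations of observational determinism, but requires the extra correctness argument you sketch.
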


\setcounter{theorem}{\value{aux}}

\begin{proof}
Huisman et al.~\cite{HuismanWS/06/TLCharacterisationOfOD} define observational determinism over programs in a simple while language, which is very similar to the execution model of \cite{Zdancewic+Myers/03/ObservationalDeterminism}. 
First, they define a special version of self-composition on their programs that allows both sides to move independently by introduces stuttering steps. 
Then, they encode programs into Kripke structures that do not only maintain the current state, but also remember the last state. 
The CTL* formula is then proved to precisely express observational determinism. 
Since HyperCTL subsumes CTL*, we can make use of the same encoding of programs into Kripke structures, but we leave out the self-composition operation on programs. 
In order to re-enable the correct synchronization of paths, however, we have to introduce stutter steps for single, not self-composed that is, programs. 
%

We prepend the CTL* formula with two path quantifiers over paths $\pi_1$ and $\pi_2$ and we let those propositions of the formula that referred to the two copies of the program now refer to the two paths, respectively. 
\end{proof}

\section{An Example of a Declassification Property}
\label{app:declassification}

Although we have shown that declassification properties can be encoded in principle, the two-step encoding via epistemic logics, will not necessarily lead to a practical verification approach. 
Further, the encoding via epistemic logics~\cite{Balliu+Dam+Guernic/2011/EpistemicTemporalLogicForIFSecurity} assumes a non-reactive system model (it has no inputs) and thus its flavor of noninterference on which they build the other encodings does not subsume Goguen and Meseguer's definition. 
In this appendix we show with the help of an example that declassification properties can be encoded in an elegant way leading to efficient algorithmic approaches. 

Let us continue to work in the setting of deterministic state machines as we used already for noninterference~\cite{Goguen+Meseguer/1982/SecurityPoliciesAndSecurityModels}. 
We consider the case of a system that shall allow a user $A$ to set a secret password for her account names ``Alice'', which a user $B$ wants to access. 
In order to get access to the account $B$ would have to guess the password correctly, which we \emph{assume} impossible without additional knowledge. 
We would like to specify the \emph{external behavior} of the system in terms of the commands that allow to change passwords and access accounts. 
We assume that passwords are stored as hash values of length 256. 

Let $C \supseteq \{\mathit{changePwd}\}\times\mathit{Accounts}\times\mathbb{B}^{256} ~\cup~ \{\mathit{login}\}\times\mathit{Accounts}\times\mathbb{B}^{256}$ and $U\supseteq\{A,B\}$, where $\mathit{Accounts}$ refers to the domain of account names.
We assume that passwords can only be changed while logged in and that user $A$ is signed in from the beginning. 
Further, let $i\in\mathbb{B}^{256}$ be the initial password. 
Then the formulation of noninterference could be refined to compare only those paths on which the current password was not guessed correctly. 
First, we define the LTL property on individual paths that the initial password is not guessed until it is replaced, and that the same holds for every password that is set during the run of the system:
\[
\begin{array}{ll}
\psi=\!\!\!\!& \bigwedge_{x\in\mathbb{B}^{256}}(\neg\langle B,\langle\mathit{login},\mathit{Alice},i\rangle\rangle \U \langle A,\langle\mathit{changePwd},\mathit{Alice},x\rangle\rangle ) ~\wedge~\\
&\qquad\qquad\qquad\qquad \LTLsquare\big(\bigwedge_{x,y\in\mathbb{B}^{256}} \langle A,\langle\mathit{changePwd},\mathit{Alice},x\rangle\rangle ~\implies~ \\
&\qquad\qquad\qquad\qquad\qquad\neg\langle B,\langle\mathit{login},\mathit{Alice},x\rangle\rangle\U \langle A,\langle\mathit{changePwd},\mathit{Alice},y\rangle\rangle\big) ~,	
\end{array}
\]
and then embed $\psi$ into $\varphi_{\mathit{NI}}$:
\[
\begin{array}{l}
\forall \pi. \forall\pi'.~[\psi]_{\pi}\wedge[\psi]_{\pi'} \implies ~\pi[0]=_{U\setminus G_H}\pi'[0] ~~\W~~
\Big( \pi[0]\not=_{U\setminus G_H}\pi'[0] ~\wedge~ \\ 
\qquad\qquad\qquad\qquad \big(~\pi'[0]\in H ~\wedge~ \LTLsquare(\pi[0]=_{U\setminus G_H}\LTLcircle\pi'[0]) ~\implies~ \LTLsquare(\pi[0]=_{O,G_L}\LTLcircle\pi'[0]) ~\big)\Big) ~.
\end{array}
\]


Adding the assumption that the password is not simply guessed correctly essentially did not change the structure of the property. 
The advantage of the logical approach is that such assumptions could also be added to other definitions of secrecy, like that of observational determinism, without the need to reengineer the algorithm for their verification completely. 

\end{document}